\newtheorem{theorem}{Theorem}
\newtheorem{assumption}{Assumption}
\newtheorem{remark}{Remark}
\begin{document}

\title{On Complexity Bounds for the Maximal Admissible Set of Linear Time-Invariant Systems}

\author{Hamid~R.~Ossareh,~\IEEEmembership{Senior Member,~IEEE,}
        Ilya~Kolmanovsky,~\IEEEmembership{Fellow,~IEEE}
\thanks{H.~R.~Ossareh is with the Department of Electrical and Biomedical Engineering, University of Vermont, Burlington, VT USA 05405. I.~Kolmanovsky is with the Department of Aerospace Engineering, University of Michigan, Ann Arbor, MI USA 48105.
e-mails: hossareh@uvm.edu, ilya@umich.edu}
}

\maketitle

\begin{abstract}
Given a dynamical system with constrained outputs, the maximal admissible set (MAS) is defined as the set of all initial conditions such that the output constraints are satisfied for all time. 
It has been previously shown that for discrete-time, linear, time-invariant, stable, observable systems with polytopic constraints, this set is a polytope described by a finite number of inequalities (i.e., has finite complexity). However, it is not possible to know the number of inequalities {\it apriori} from problem data. To address this gap, this contribution presents two computationally efficient methods to obtain {\it upper bounds} on the complexity of the MAS. The first method is algebraic and is based on matrix power series, while the second is geometric and is based on Lyapunov analysis. The two methods are rigorously introduced, a detailed numerical comparison between the two is provided, and an extension to systems with constant inputs is presented. Knowledge of such upper bounds can speed up the computation of MAS, and can be beneficial for defining the memory and computational requirements for storing and processing the MAS, as well as the control algorithms that leverage the MAS. 
\end{abstract}

\begin{IEEEkeywords}
Maximal admissible set, admissibility index, finite determination, linear systems, Cayley Hamilton Theorem, Lyapunov analysis.
\end{IEEEkeywords}

\IEEEpeerreviewmaketitle

\section{Introduction}\label{sec:intro}
Consider a discrete-time linear time-invariant system
\begin{equation}\label{eq:unforced system}
\begin{aligned}
x(t+1) = Ax(t) \\
y(t) = Cx(t)
\end{aligned}
\end{equation}
where $t\in \mathbb{Z}^+$ is the discrete time index, $x(t)\in\mathbb{R}^n$ is the state vector, and $y(t) \in \mathbb{R}^q$ is the output vector. The output is required to satisfy the constraint
\begin{equation} \label{eq:constraint}
y(t) \in \mathbb{Y} 
\end{equation}
where $\mathbb{Y}$ is a compact polytope with the origin in its interior. 
This paper is concerned with the set of all initial conditions for which \eqref{eq:constraint} is satisfied for all time, that is:
\begin{equation}\label{eq:MAS unforced infinite}
O_\infty = \{x_0: C A^t x_0 \in \mathbb{Y}, \forall t\geq 0\}
\end{equation}
This set, which is referred to as the \textit{maximal admissible set} (MAS) \cite{Gilbert_1991}, is an invariant set that has been broadly employed in the control literature, for example, as a terminal constraint in the Model Predictive Control (MPC) optimization problem to guarantee closed-loop stability \cite{Camacho_2013,rawlings2017model}, or in Reference Governors and Command Governors to guarantee infinite-horizon constraint satisfaction \cite{Gilbert_1995_2, garone2017reference}. This set also plays a major role in the analysis of constrained systems and in set-theoretic methods in control, see e.g., \cite{blanchini1999set, blanchini2008set}. The properties and computations of this set, as well as its extensions to other classes of systems, have also received much attention in the literature, see e.g., \cite{kolmanovsky1998theory,Pluymers_2005, rakovic2008invariant, HIRATA2008526,ossareh2020reference, vaselnia2021inclusion,perez2011maximal,osorio2018stochastic,benfatah2021maximal}.

In the paper \cite{Gilbert_1991}, it was shown that if \eqref{eq:unforced system} is asymptotically stable and the pair $(A,C)$ is observable, then $O_\infty$ is a compact polytope which is finitely determined, i.e., it can be described by a finite number of inequalities:
\begin{equation}\label{eq:MAS unforced finite}
O_\infty = \{x_0: C A^t x_0 \in \mathbb{Y}, t = 0,\ldots,t^*\}
\end{equation}
where $t^*$, referred to as the \textit{admissibility index} of MAS, is the last ``prediction time-step" required to fully characterize the MAS. One difficulty, which the current paper seeks to overcome, is that $t^*$ is not known \textit{apriori} from problem data. To find it, one would need to construct the MAS iteratively by adding inequalities one time-step at a time and checking for redundancy of the newly added constraints. Once all the newly added constraints are redundant, $t^*$ has been found. To carry out the redundancy check,  Linear Programs (LPs) must be solved, which renders the construction of MAS computationally demanding for high dimensional systems, those with slow dynamics, those with constraint sets of high complexity, and in situations where $O_\infty$ must be computed in real-time, e.g., to accommodate changing models or constraints.

To fill this gap, this paper provides two methods to obtain an {\it upper bound} on $t^*$. This allows one to replace $t^*$ in \eqref{eq:MAS unforced finite} by its upper bound, thereby eliminating the need to solve LPs during the construction of MAS (at the expense of having potentially redundant inequalities in the set description). In addition to speeding up the computation of MAS, knowledge of such an upper bound is helpful for defining the memory and processing requirements to store and employ the MAS for the purpose of control. Moreover, from a theoretical standpoint, the two methods presented here can be viewed as alternative justifications for the finite determinism of the MAS and, more specifically, the existence of $t^*$.

The first method for finding an upper bound on $t^*$ is algebraic, and leverages matrix power series to express the output at a time $t$ as a linear combination of outputs at previous times, which helps us determine the time-step after which the constraints  become redundant. The second method is geometric and relies on the decay rate of a quadratic Lyapunov function towards a constraint-admissible ellipsoidal level set. Both methods are computationally efficient and do not rely on optimization solvers. To the best of our knowledge, the first method is new. The second method is inspired by the existing literature (see e.g., \cite{garone2017reference,CDCworkshop}); however, it is presented here in complete details with explicit bounds, and several enhancements to it are proposed.

This paper presents the theoretical justification for both methods, as well as corresponding algorithms for the computation of the upper bounds. The upper bounds obtained from the two methods are then compared against the true value of $t^*$ using a Monte Carlo study. It is shown that the first method results in a tighter upper bound as compared with the second method for all the random systems considered. The upper bounds also provide insight into the fundamental nature of $t^*$ itself. Specifically, it is shown that $t^*$ is closely related to the spectral radius of $A$.

Finally, the two methods are extended to systems with constant inputs, which have been studied extensively in the literature on reference and command governors: 
\begin{equation}\label{eq:forced system}
\begin{aligned}
x(t+1) = Ax(t) + Bu\\
y(t) = Cx(t) + Du
\end{aligned}
\end{equation}
where $u \in \mathbb{R}^m$ is a constant input. The definition of MAS for \eqref{eq:forced system} is similar to \eqref{eq:MAS unforced infinite}, but modified to account for the input:
\begin{equation}\label{eq:MAS forced infinitee}
O_\infty = \{(x_0,u): y(t) \in \mathbb{Y}, \forall t\geq 0\}
\end{equation}
It is shown in \cite{Gilbert_1991} that this set is generally {\it not} finitely determined (i.e., it cannot be described by a finite number of inequalities). However, a finitely-determined, positively-invariant inner approximation, denoted by $\widetilde{O}_\infty$, can be obtained by tightening the steady-state constraint: 
\begin{equation}\label{eq:MAS forced}
\begin{aligned}
\widetilde{O}_\infty = \{(x_0,u): y(\infty) \in (1-\epsilon) \mathbb{Y}, y(t) \in \mathbb{Y},
t = 0,\ldots,t^*\}
\end{aligned}
\end{equation}
where $\epsilon \in (0,1)$ is typically a small number. In \eqref{eq:MAS forced}, $y(\infty)=H_0 u$, where $H_0$ is the DC gain. Similar to the unforced case, the admissibility index, $t^*$, for this case is not known {\it apriori} from problem data. We thus extend the two methods described previously to find upper bounds on $t^*$. As we show, the upper bounds depend explicitly on the value of $\epsilon$. A Monte Carlo study similar to the one described above is conducted to compare the two methods. Similar to the case of unforced systems, Method 1 results in tighter upper bounds for all random systems considered.

The outline of this paper is as follows. Section~\ref{sec:unforced} presents the two methods described above for the unforced case and provides a numerical study to compare them. Section~\ref{sec:forced} extends the results to the case of systems with constant inputs. Conclusions and future works are provided in Section~\ref{sec:conclusions}. 

The notation in this paper is as follows: $\mathbb{Z}^+$, $\mathbb{R}$, $\mathbb{R}^n$, $\mathbb{R}^{n\times n}$, and $\mathbb{C}$ denote the set of non-negative integers, real numbers, $n$-dimensional vectors of real numbers, $n\times n$ matrices with real entries, and complex numbers, respectively.  For a symmetric matrix $P=P^{\sf T}$, we say it is positive definite and write $P \succ 0$ if all the eigenvalues of $P$ are strictly positive. We use the variables $t \in \mathbb{Z}^+$, $t^* \in \mathbb{Z}^+$, and $m \in \mathbb{Z}^+$ to denote the discrete time index, the admissibility index of MAS, and the upper bound on the admissibility index, respectively.

\section{Main Results: Unforced Systems}\label{sec:unforced}

Consider system \eqref{eq:unforced system} with constraint \eqref{eq:constraint}. Our goal is to obtain an upper bound on $t^*$ in \eqref{eq:MAS unforced finite}. This section presents two methods to obtain such an upper bound. The first method, which we refer to as ``Method 1", is based on a matrix power series expansion and the second, which we refer to as ``Method 2", is based on Lyapunov analysis. Consistent with the assumptions in the MAS literature (see, e.g., \cite{Gilbert_1991}), we assume that:
\begin{assumption}
System \eqref{eq:unforced system} is asymptotically stable and the pair $(A,C)$ is observable. Furthermore, the constraint set in \eqref{eq:constraint} is described by 
\begin{equation}\label{eq:constraint detailed}
\mathbb{Y} := \{y: -y_j^l \leq y_j \leq y_j^u, \quad j = 1,\ldots,q \}
\end{equation}
where $y_j$ is the $j$-th element of $y$, $y_j^l>0$ defines the lower limit for $y_j$, and $y_j^u>0$ defines the upper limit for $y_j$.
\label{assumption1}
\end{assumption}

\subsection{Method 1: Matrix Power Series}

The general idea behind this method is to first expand $A^t$ in terms of lower powers of $A$. This expansion allows us to express the output $y(t)$ in \eqref{eq:unforced system} as a linear combination of the outputs at previous times. We show that if there exists an integer $m$ such that the  sum of the coefficients in the expansion of $A^{m+1}$ is ``sufficiently small", then $m$ is an upper bound on $t^*$. We then show that such an expansion always exists thanks to the Cayley Hamilton Theorem.





We begin by stating the main result of this section.

\begin{theorem}\label{theorem:unforced}
Consider system \eqref{eq:unforced system} with constraint \eqref{eq:constraint}, and  
suppose Assumption \ref{assumption1} holds. Suppose there exists an integer $m$, $m \geq 0$, such that $A^{m+1}$ can be expanded as:
\begin{equation}\label{eq:expansion}
A^{m+1} = \sum\limits_{i=0}^{m} \alpha_i A^{i}
\end{equation}
where $\alpha_i\in\mathbb{R}$, $i=0,\ldots,m$, satisfy the following 
condition:
\begin{equation}\label{eq:condition on alphai}
\sum_{\alpha_i > 0} \alpha_i - \gamma 
\sum_{\alpha_i < 0} \alpha_i  \leq 1
\end{equation}
where $\gamma$ is the largest asymmetry in the constraints, i.e., 
\begin{equation}\label{eq:gamma}
\gamma = \max_j \left\{\max \Big\{\frac{y_j^u}{y_j^l}, \frac{y_j^l}{y_j^u}\Big\}\right\}
\end{equation}
Then, $m$ is an upper bound on the admissibility index, $t^*$, of the MAS for \eqref{eq:unforced system}--\eqref{eq:constraint}; that is, $t^* \leq m$.
\end{theorem}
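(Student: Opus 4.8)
The plan is to show that, under the hypotheses, the constraint at time $m+1$ (and by induction, every later time) is implied by the constraints at times $0,\ldots,m$. Fix an initial condition $x_0$ satisfying $CA^t x_0 \in \mathbb{Y}$ for $t=0,\ldots,m$, and write $y(t) = CA^t x_0$. Multiplying the expansion \eqref{eq:expansion} on the left by $C$ and on the right by $x_0$ gives $y(m+1) = \sum_{i=0}^{m}\alpha_i\, y(i)$, so $y(m+1)$ is an affine combination of admissible outputs. The goal is to bound each component $y_j(m+1)$ between $-y_j^l$ and $y_j^u$ using the componentwise bounds $-y_j^l \le y_j(i) \le y_j^u$ and the coefficient condition \eqref{eq:condition on alphai}.

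The key estimate is the following. For the upper bound on $y_j(m+1)$, split the sum according to the sign of $\alpha_i$: terms with $\alpha_i>0$ are maximized by taking $y_j(i)=y_j^u$, contributing at most $\big(\sum_{\alpha_i>0}\alpha_i\big)y_j^u$, while terms with $\alpha_i<0$ are maximized by taking $y_j(i)=-y_j^l$, contributing at most $\big(-\sum_{\alpha_i<0}\alpha_i\big)y_j^l$. Using $y_j^l \le \gamma\, y_j^u$ (which follows from the definition \eqref{eq:gamma} of $\gamma$, since $\gamma \ge y_j^l/y_j^u$), the second group is bounded by $\gamma\big(-\sum_{\alpha_i<0}\alpha_i\big)y_j^u$. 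Hence
\begin{equation*}
y_j(m+1) \le \Big(\sum_{\alpha_i>0}\alpha_i - \gamma\sum_{\alpha_i<0}\alpha_i\Big)y_j^u \le y_j^u,
\end{equation*}
where the last inequality is exactly \eqref{eq:condition on alphai}. The lower bound $y_j(m+1)\ge -y_j^l$ is symmetric: the $\alpha_i>0$ terms are minimized at $y_j(i)=-y_j^l$ and the $\alpha_i<0$ terms at $y_j(i)=y_j^u$, and one uses $y_j^u \le \gamma\, y_j^l$ from \eqref{eq:gamma}. This shows $y(m+1)\in\mathbb{Y}$.

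To extend this to all $t > m+1$, note that once we know $CA^t x_0 \in \mathbb{Y}$ for $t=0,\ldots,m+1$, we can apply the same argument with $x_0$ replaced by $Ax_0$: indeed $CA^t(Ax_0) = CA^{t+1}x_0$, so admissibility at times $0,\ldots,m$ for the shifted state is admissibility at times $1,\ldots,m+1$ for $x_0$, which we have, and the expansion argument then yields admissibility at time $m+2$. Iterating (formally, an induction on $t$) gives $CA^t x_0 \in \mathbb{Y}$ for all $t\ge 0$, so $x_0 \in O_\infty$. Therefore the finite intersection over $t=0,\ldots,m$ already equals $O_\infty$, which means $t^* \le m$. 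The existence of such an $m$ — needed to make the theorem non-vacuous and presumably addressed separately in the paper — follows from Cayley–Hamilton together with asymptotic stability, since the expansion coefficients of $A^{m+1}$ in terms of $A^0,\ldots,A^{n-1}$ decay geometrically as $m\to\infty$; the main substantive obstacle in the present statement, however, is simply the careful sign-bookkeeping in the componentwise estimate above, which the condition \eqref{eq:condition on alphai} and the asymmetry constant $\gamma$ are tailored to close exactly.
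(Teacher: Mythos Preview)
Your proof is correct and follows essentially the same approach as the paper: both expand $y_j(m+1)$ via \eqref{eq:expansion}, split the sum by the sign of $\alpha_i$, and use the componentwise bounds together with the asymmetry constant $\gamma$ to close the estimate, then propagate to all later times by induction. The only cosmetic differences are that the paper first derives the exact inequalities involving $y_j^u/y_j^l$ and $y_j^l/y_j^u$ before invoking $\gamma$, and phrases the induction step via the decomposition $A^{k+1}=A^{m+1}A^{k-m}$ rather than your equivalent shift $x_0\mapsto Ax_0$.
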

\begin{proof}
To show that $m$, as defined in the theorem, is an upper bound on $t^*$, we must prove that $y(t)\in \mathbb{Y}$ for $t \leq m$ implies that $y(t)\in \mathbb{Y}$ for all $t \geq m+1$, i.e., the latter inequalities are implied by the former and, hence, redundant. We prove this assertion using mathematical induction. 

Induction base case: Assume $y(t)\in \mathbb{Y}$ for $t \leq m$ and show that $y(m+1) \in \mathbb{Y}$. To show this, note that the $j$-th output, starting from an initial condition, $x_0$, can be expanded using Eq.~\eqref{eq:expansion}:
$$
y_j(m+1) = C_j A^{m+1} x_0 = \sum\limits_{i=0}^{m} \alpha_i (C_j A^{i} x_0)
$$
The assumption $y(t) \in \mathbb{Y}$ for $t\leq m$ implies that $C_j A^ix_0$ in the above sum satisfies: $-y_j^l \leq C_j A^ix_0\leq y_j^u$. Thus, if $\alpha_i>0$, we have that $-\alpha_i 
 y_j^l \leq \alpha_i  C_j A^ix_0\leq \alpha_i y_j^u$, and if $\alpha_i < 0$, we have that $\alpha_i y_j^u \leq \alpha_i  C_j A^ix_0\leq -\alpha_i  y_j^l$. Thus, summation over $i$ results in:
 $$
 -y_j^l \sum_{\alpha_i>0}\alpha_i + y_j^u \sum_{\alpha_i<0}\alpha_i \leq  y_j(m+1) \leq y_j^u \sum_{\alpha_i>0}\alpha_i  - y_j^l \sum_{\alpha_i<0}\alpha_i
 $$
To ensure that $-y_j^l \leq  y_j(m+1) \leq y_j^u$, it suffices for $\alpha_i$ to satisfy:
\begin{equation}\label{eq:proof1}
 -y_j^l \leq -y_j^l \sum_{\alpha_i>0}\alpha_i + y_j^u \sum_{\alpha_i<0}\alpha_i
\end{equation}
\begin{equation}\label{eq:proof2}
 y_j^u \sum_{\alpha_i>0}\alpha_i  -y_j^l \sum_{\alpha_i<0}\alpha_i \leq  y_j^u 
\end{equation}
or if we divide both sides of \eqref{eq:proof2} by $y_j^u>0$, and both sides of \eqref{eq:proof1} by $-y_j^l<0$, it suffices that
$$
   \sum_{\alpha_i>0}\alpha_i - \frac{y_j^u}{y_j^l}\sum_{\alpha_i<0}\alpha_i \leq 1
$$
$$
    \sum_{\alpha_i>0}\alpha_i - \frac{y_j^l}{y_j^u}\sum_{\alpha_i<0}\alpha_i \leq 1
$$
Both of these inequalities hold as they are implied by \eqref{eq:condition on alphai}. Thus, $-y_j^l \leq y_j(m+1) \leq y_j^u$. Since $j$ was arbitrary, we have that $y(m+1) \in \mathbb{Y}$, as desired. 

Induction main step: Assume $y(t)\in \mathbb{Y}$ for $t \leq k$, where $k \geq m+1$, and show that $y(k+1) \in \mathbb{Y}$. We again write the $j$-th output:
$y_j(k+1) = C_j A^{k+1} x_0$ but now decompose $A^{k+1} = 
A^{m+1} A^{k-m}$. We thus obtain:
$$
y_j(k+1) = C_j A^{m+1} A^{k-m} x_0 = \sum\limits_{i=0}^{m} \alpha_i (C_j A^{i+k-m} x_0)
$$
The assumption $y(t) \in \mathbb{Y}$ for $t\leq k$ together with $0 \leq i+k-m \leq k$ imply that $C_j A^{i+k-m} x_0$ in the above sum satisfies: $-y_j^l \leq C_j A^{i+k-m}x_0\leq y_j^u$. The rest of the proof from this point on follows the same arguments as in the induction base case. This concludes the proof.
\end{proof}

\begin{remark}
In the case of symmetric constraints, the expression in Theorem 1 can be further simplified. Specifically, suppose  that $y_j^l = y_j^u, \forall j$ in \eqref{eq:constraint detailed}. Then, $\gamma = 1$ and so condition \eqref{eq:condition on alphai} becomes:
\begin{equation}\label{eq:unforced condition symmetric}
\sum_{i} |\alpha_i| \leq 1
\end{equation}
\end{remark}

\begin{remark}
If all the coefficients in the expansion of $A^{m+1}$ are positive, then \eqref{eq:condition on alphai} becomes:
$$
\sum_{i} \alpha_i \leq 1
$$
which is completely independent of the constraint set (i.e., independent of the $C$ matrix, $y_j^l$, and $y_j^u$).
\end{remark}

We now prove the existence of, and a develop a method to construct, the expansion in \eqref{eq:expansion} satisfying condition \eqref{eq:condition on alphai}. We first recall some facts from linear algebra. The characteristic polynomial of any square matrix $A \in \mathbb{R}^{n\times n}$ is defined as $\Delta(s) :=\det (sI-A)$. It is an $n$-th degree polynomial whose roots are the eigenvalues, $\lambda_i \in \mathbb{C}$, of $A$. We can thus write:
\begin{equation}\label{eq:characteristic polynomial}
\Delta (s) = (s-\lambda_1)\cdots(s-\lambda_n) = s^n + c_{n-1} s^{n-1} + \ldots + c_1 s + c_0
\end{equation}
The Cayley Hamilton theorem states that any square matrix satisfies its own characteristic polynomial:
\begin{theorem}[see \cite{Chen_1998}]
Let $A \in \mathbb{R}^{n\times n}$ be a matrix and let $\Delta (s)$ be its characteristic polynomial. Then, $\Delta(A)=0$.
\end{theorem}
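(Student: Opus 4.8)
The plan is to prove the Cayley--Hamilton theorem through the classical adjugate identity, which reduces the statement to a routine bookkeeping exercise with polynomials having matrix coefficients. Recall that for every $s\in\mathbb{C}$ one has $\mathrm{adj}(sI-A)\,(sI-A) = \det(sI-A)\,I = \Delta(s)\,I$, where $\mathrm{adj}(\cdot)$ denotes the adjugate, i.e.\ the transpose of the cofactor matrix. Since each entry of $\mathrm{adj}(sI-A)$ is, up to sign, the determinant of an $(n-1)\times(n-1)$ submatrix of $sI-A$, it is a polynomial in $s$ of degree at most $n-1$. Hence one may collect powers of $s$ and write $\mathrm{adj}(sI-A) = \sum_{k=0}^{n-1} B_k\, s^k$ for suitable constant matrices $B_k\in\mathbb{R}^{n\times n}$, while $\Delta(s) = s^n + \sum_{k=0}^{n-1} c_k s^k$ as in \eqref{eq:characteristic polynomial}.

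Next I would substitute these two expansions into the identity, obtaining $\bigl(\sum_{k=0}^{n-1} B_k s^k\bigr)(sI-A) = \bigl(s^n + \sum_{k=0}^{n-1} c_k s^k\bigr) I$, expand the left-hand side as $\sum_{k=0}^{n-1} B_k\, s^{k+1} - \sum_{k=0}^{n-1} B_k A\, s^k$, and equate matrix coefficients of equal powers of $s$. This produces the relations $B_{n-1}=I$ (from $s^n$), $B_{j-1} - B_j A = c_j I$ for $j=1,\ldots,n-1$, and $-B_0 A = c_0 I$ (from $s^0$). The concluding step is to multiply the relation associated with $s^j$ on the right by $A^{j}$ and sum over $j=0,\ldots,n$: on the left the sum telescopes, because the term $-B_j A^{j+1}$ produced from the relation indexed $j$ cancels the term $+B_j A^{j+1}$ produced from the relation indexed $j+1$, so the total left-hand side is $0$; on the right one is left with exactly $A^n + \sum_{j=1}^{n-1} c_j A^j + c_0 I = \Delta(A)$. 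Therefore $\Delta(A)=0$.

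The one genuine subtlety I would highlight is that one must \emph{not} simply ``set $s=A$'' in $\mathrm{adj}(sI-A)(sI-A)=\Delta(s)I$: this is an identity between polynomials whose coefficients are matrices, and plugging a (non-scalar) matrix into the scalar argument is not an operation that respects products without further justification, so the tempting one-line argument that substitution yields $(\cdot)\cdot 0 = \Delta(A)I$ is not legitimate as stated. The coefficient-matching together with the telescoping sum above is precisely the device that makes the manipulation rigorous, and everything else is elementary. As an alternative, I would note in passing that one can instead verify $\Delta(A)=0$ first for diagonalizable $A$ --- where it is immediate, since $\Delta(A)v = \Delta(\lambda)v = 0$ on each eigenvector --- and then extend to all $A\in\mathbb{R}^{n\times n}$ by observing that $A\mapsto \Delta_A(A)$ is a polynomial (hence continuous) map that vanishes on the dense set of diagonalizable matrices.

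I do not expect any real obstacle here; the result is classical and the only care required is the point just noted about not substituting a matrix into a scalar-variable polynomial identity directly.
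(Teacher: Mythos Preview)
Your proof is correct and is the classical adjugate-matrix argument for Cayley--Hamilton; the coefficient-matching and telescoping are carried out accurately, and your remark about why one cannot simply substitute $s=A$ into the adjugate identity is exactly the point that needs care. The alternative density argument you sketch is also valid.

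As for comparison with the paper: there is nothing to compare. The paper does not prove this theorem at all; it merely states it with a citation to a standard linear algebra text (the bracketed ``see \cite{Chen_1998}'' in the theorem header) and then immediately uses it to expand $A^t$ in lower powers of $A$. So you have supplied a self-contained proof where the authors were content to quote the result as background.
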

This result allows us to express $A^t$, for any $t\geq n$, as a finite power series in lower powers of $A$. Specifically, $A^n$ can be expanded as: 
\begin{equation}\label{eq:CH n}
A^n = -c_0 I - c_1 A - \ldots - c_{n-1} A^{n-1} 
\end{equation}
where $c_i$ are the coefficients in \eqref{eq:characteristic polynomial} and are uniquely defined. 
Similarly, $A^{n+1}$ can be expanded in the same powers of $A$:
\begin{align*}
A^{n+1} = A(A^n) &= -c_0 A  - \ldots - c_{n-2} A^{n-1} - c_{n-1} A^{n} \\ 
&= (c_0 c_{n-1})I+(-c_0+c_1c_{n-1})A+\ldots+\\
&\quad\,\,(-c_{n-2}+c_{n-1}c_{n-1})A^{n-1} 
\end{align*}
Generalizing the above to any $t \geq n$, one can expand $A^t$ as:
\begin{equation}\label{eq:CH Expansion}
A^t = \sum\limits_{i=0}^{n-1} \beta_i(t)A^i
\end{equation}
where $\beta_i(t)$ denotes the $i$-th coefficient in the expansion of the $t$-th power of $A$. Note that expansion 
of $A^t$ in lower powers of $A$ is generally not unique, but $\beta_i(t)$ in \eqref{eq:CH Expansion} are, by construction, uniquely defined.

To simplify the presentation, we stack the coefficients of the $t$-th power into a vector and denote it by $\beta(t)$:
$$
\beta(t) = [\beta_0(t) \cdots \beta_{n-1}(t)]^T \in \mathbb{R}^n
$$
The following lemma characterizes $\beta(t)$ and its convergence properties as $t \rightarrow \infty$.
\begin{theorem}
Let $A \in \mathbb{R}^{n\times n}$ be any square matrix and let $\beta(t), t\geq n$, be the vector of coefficients in the expansion of $A^t$, as defined above. Then, $\beta(t)$ satisfies the difference equation
\begin{equation}\label{eq:CH recursion}
\beta(t+1) = \left[\begin{array}{c c c c c} 0 & 0 & \cdots & 0 & -c_0 \\ 1 & 0 &\cdots &0 &-c_1 \\ 0 & 1 &\cdots & 0 &-c_2 \\ \vdots & \vdots  && \vdots \\ 0 & 0 &\cdots &1 &-c_{n-1}  \end{array} \right] \beta(t)
\end{equation}
with initial condition $\beta(n)=[-c_0\,\,\, \cdots\,\,\, -c_{n-1}]^T$. In addition, if $A$ is asymptotically stable, then $\lim_{t\rightarrow \infty} \beta(t)=0$.
\label{lemma:CH coeff}
\end{theorem}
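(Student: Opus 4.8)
The plan is to prove the two assertions separately and essentially directly: the difference equation \eqref{eq:CH recursion} will follow by carrying out one step of the Cayley--Hamilton reduction symbolically, and the limit $\beta(t)\to 0$ will follow because the coefficient matrix in \eqref{eq:CH recursion} is a companion matrix whose spectrum coincides with that of $A$.

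For the recursion, I would start from the defining identity $A^t=\sum_{i=0}^{n-1}\beta_i(t)A^i$ of \eqref{eq:CH Expansion} and multiply on the left by $A$, obtaining $A^{t+1}=\sum_{i=0}^{n-1}\beta_i(t)A^{i+1}=\sum_{i=1}^{n-1}\beta_{i-1}(t)A^i+\beta_{n-1}(t)A^n$ after reindexing. The only term not yet written in the ``basis'' $I,A,\dots,A^{n-1}$ is $\beta_{n-1}(t)A^n$; substituting $A^n=-c_0I-c_1A-\dots-c_{n-1}A^{n-1}$ from \eqref{eq:CH n} and collecting the coefficient of each power $A^i$ gives $\beta_0(t+1)=-c_0\beta_{n-1}(t)$ and $\beta_i(t+1)=\beta_{i-1}(t)-c_i\beta_{n-1}(t)$ for $1\le i\le n-1$, which is exactly the matrix recursion in \eqref{eq:CH recursion}. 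Since this computation is precisely the reduction step that produces $\beta(t+1)$ from $\beta(t)$ in the construction of \eqref{eq:CH Expansion}, no appeal to uniqueness of the representation is required; and reading off \eqref{eq:CH n} directly yields the stated initial value $\beta(n)=[-c_0\ \cdots\ -c_{n-1}]^{\sf T}$.

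For the convergence claim, I would observe that the square matrix multiplying $\beta(t)$ in \eqref{eq:CH recursion} is a companion matrix of the monic polynomial $s^n+c_{n-1}s^{n-1}+\dots+c_1s+c_0$, so a short cofactor expansion along its first row shows that its characteristic polynomial equals exactly $\Delta(s)$ in \eqref{eq:characteristic polynomial}; hence its eigenvalues are the eigenvalues $\lambda_1,\dots,\lambda_n$ of $A$. If $A$ is asymptotically stable, then $|\lambda_i|<1$ for every $i$, so this companion matrix has spectral radius strictly less than one, its powers tend to zero, and since $\beta(t)$ is the $(t-n)$-th power of that matrix applied to $\beta(n)$, we conclude $\lim_{t\to\infty}\beta(t)=0$. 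I expect the only fussy points to be the index bookkeeping when collecting powers of $A$ in the first step and the explicit verification that the companion matrix realizes $\Delta(s)$; in particular, one cannot simply deduce $\beta(t)\to 0$ from $A^t\to 0$, because $I,A,\dots,A^{n-1}$ need not be linearly independent, which is exactly why routing the argument through the spectrum of the companion matrix is the clean way to go.
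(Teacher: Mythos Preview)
Your proposal is correct and follows essentially the same approach as the paper: both derive the recursion by multiplying $A^t=\sum_i\beta_i(t)A^i$ by $A$, substituting \eqref{eq:CH n} for $A^n$, and matching coefficients, and both conclude $\beta(t)\to 0$ by noting that the recursion matrix has the same eigenvalues as $A$. The only minor difference is that the paper phrases the latter via the observable canonical form and a similarity argument, whereas you compute the characteristic polynomial of the companion matrix directly; your route is slightly cleaner, since similarity to the companion form actually requires $A$ to be non-derogatory, while the characteristic-polynomial identification holds unconditionally.
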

\begin{proof}
The initial condition is already shown in Eq. \eqref{eq:CH n}. To derive the recursion, suppose $A^{t}  = \sum_{i=0}^{n-1} \beta_i (t) A^i
$, where $\beta_i(t)$ are given. To find $\beta(t+1)$ in terms of $\beta(t)$, we expand $A^{t+1}$ as follows:
$$
A^{t+1} = A(A^t) = A\sum_{i=0}^{n-1} \beta_i (t) A^i = \sum_{i=0}^{n-1} \beta_i (t) A^{i+1}
$$
$$
= \beta_{n-1}(t)A^n + \sum_{i=0}^{n-2} \beta_i (t) A^{i+1} 
$$
$$
= \beta_{n-1}(t) \sum_{i=0}^{n-1} -c_i A^i + \sum_{i=0}^{n-2} \beta_i (t) A^{i+1} 
$$
$$= -\beta_{n-1}(t)c_0I + \sum_{i=1}^{n-1} \left(-\beta_{n-1}(t)c_i+\beta_{i-1}(t)\right) A^i
$$
Thus, $\beta_0(t+1) = -\beta_{n-1}(t)c_0$ and $\beta_i(t+1) = -\beta_{n-1}(t)c_i+\beta_{i-1}(t)$ for $i=1,\ldots,n-1$. This coincides with  recursion \eqref{eq:CH recursion}.

To prove that $\beta(t)\rightarrow 0$ as $t \rightarrow \infty$, note that the matrix in \eqref{eq:CH recursion} is exactly the  observable canonical form of matrix $A$, see \cite{Chen_1998} for details. Since this matrix can be obtained through a similarity transformation of $A$, it has the same eigenvalues as $A$. Thus, the recursion in \eqref{eq:CH recursion} corresponds to an asymptotically stable dynamical system and, hence, $\beta(t)$ must converge to 0. This proves the lemma.
\end{proof}

The above lemma guarantees the existence of an integer $m$ such that the coefficients of the expansion of $A^{m+1}$ as defined in \eqref{eq:expansion} satisfy condition \eqref{eq:condition on alphai}. To see this, compute $\beta(t)$ using the recursion in \eqref{eq:CH recursion} for increasing $t$ starting from $t=n$, and stop when
\begin{equation}\label{eq:condition on betai}
\sum_{\beta_i(t) > 0} \beta_i(t) - \gamma 
\sum_{\beta_i(t) < 0} \beta_i(t)  \leq 1
\end{equation}
Note that such $t$ always exists, because according to Theorem~\ref{lemma:CH coeff}, $\beta(t)\rightarrow 0$ as $t \rightarrow \infty$ and thus the left hand side of \eqref{eq:condition on betai} can be made arbitrarily small. Such $t$ corresponds to $m+1$ in Theorem \ref{theorem:unforced}, where the $\alpha_i$ in \eqref{eq:condition on alphai} are related to $\beta_i(t)$ in \eqref{eq:condition on betai} as follows: $\alpha_i = \beta_i(t)$ for $i=0,\ldots,n-1$ and $\alpha_i=0$ for $i=n,\ldots,m$. 
This leads to Algorithm \ref{Algorithm:algorithm1} for finding an upper bound for $t^*$.

\begin{algorithm}
 \caption{Compute upper bound on $t^*$ using Method 1}
     \hspace*{\algorithmicindent} \textbf{Input: $A, y_j^l, y_j^u$} \\
    \hspace*{\algorithmicindent} \textbf{Output: $m$ such that $t^*\leq m$} \\
 \begin{algorithmic}[1]
 \STATE Compute the Cayley Hamilton coefficients, $c_i$, using \eqref{eq:characteristic polynomial}, and $\gamma$ using \eqref{eq:gamma}.
 \STATE Set $t=n$ and initialize $\beta(n)$ as in Theorem \ref{lemma:CH coeff}.
 \STATE  If $\beta(t)$ satisfies \eqref{eq:condition on betai}, then: $m=t-1$, STOP.
 \STATE Increment $t$ by 1. Compute $\beta(t)$ using \eqref{eq:CH recursion}. Go to step 3.
 \end{algorithmic} 
 \label{Algorithm:algorithm1}
 \end{algorithm}

The above results allow us to say more about the value of $t^*$ itself in the case of first order systems. 
\begin{theorem}
Consider \eqref{eq:unforced system} with $A\in\mathbb{R}$ (i.e., a first order system) and assume Assumption \ref{assumption1} holds. If $A>-\frac{1}{\gamma}$, then $t^*=0$. In particular, if the constraints are symmetric (i.e., $\gamma = 1$ in \eqref{eq:gamma}), then $t^*=0$.
\end{theorem}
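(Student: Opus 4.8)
My plan is to obtain this as an immediate corollary of Theorem~\ref{theorem:unforced} applied in the scalar case with $m=0$. When $A\in\mathbb{R}$, the trivial expansion $A^{1}=A\cdot A^{0}$ is of the form \eqref{eq:expansion} with $m=0$ and the single coefficient $\alpha_0=A$ (this is also exactly what the Cayley--Hamilton construction yields, since $n=1$ and $\Delta(s)=s-A$, so $\beta_0(1)=A$). Since $m=0\geq0$, the hypotheses of Theorem~\ref{theorem:unforced} are met as soon as $\alpha_0=A$ satisfies condition \eqref{eq:condition on alphai}, so the whole proof reduces to checking that inequality.

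I would verify \eqref{eq:condition on alphai} by a three-way case split on the sign of $A$. If $A>0$, the sum over negative coefficients is empty and the condition reads $A\leq 1$, which holds because asymptotic stability (Assumption~\ref{assumption1}) forces $|A|<1$. If $A=0$, both sums are empty and the condition is $0\leq1$. If $A<0$, the sum over positive coefficients is empty and the condition reads $-\gamma A\leq 1$, i.e.\ $A\geq-1/\gamma$; this is guaranteed by the hypothesis $A>-1/\gamma$. Hence in all cases condition \eqref{eq:condition on alphai} holds, so Theorem~\ref{theorem:unforced} gives $t^*\leq m=0$, and since $t^*\in\mathbb{Z}^+$ we conclude $t^*=0$.

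For the symmetric special case, I would simply note that $y_j^l=y_j^u$ for all $j$ gives $\gamma=1$ in \eqref{eq:gamma}, so $-1/\gamma=-1$; asymptotic stability already implies $A>-1$, and therefore the hypothesis $A>-1/\gamma$ is automatically satisfied, yielding $t^*=0$ unconditionally in this case.

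There is essentially no serious obstacle here: the only points requiring a little care are (i) observing that the scalar expansion $A=A\cdot A^0$ is a legitimate instance of \eqref{eq:expansion} with $m=0$, and (ii) remembering to invoke asymptotic stability to rule out $A\geq1$, which is needed to close the $A>0$ branch of the case analysis (the asymmetry bound $\gamma$ only controls the negative side). One could alternatively give a direct argument — if $-y^l\le A^tx_0\le y^u$ holds at $t=0$ then, using $0<A<1$ or $-1/\gamma<A<0$, the same bounds propagate to all $t\ge1$ — but routing it through Theorem~\ref{theorem:unforced} is cleaner and avoids repeating that estimate.
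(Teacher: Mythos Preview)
Your proposal is correct and follows essentially the same approach as the paper: both identify the single coefficient $\alpha_0=\beta_0(1)=A$ via the scalar Cayley--Hamilton expansion and then verify condition~\eqref{eq:condition on alphai} (equivalently~\eqref{eq:condition on betai}) using asymptotic stability for the upper bound and the hypothesis $A>-1/\gamma$ for the lower bound, concluding $t^*\le m=0$ via Theorem~\ref{theorem:unforced}. Your write-up is slightly more explicit in separating the sign cases, whereas the paper simply asserts that $-1/\gamma<\beta(1)<1$ implies the condition ``regardless of sign,'' but the logic is identical.
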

\begin{proof}
From \eqref{eq:CH n}, it follows that $A=-c_0$ and from Theorem~\ref{lemma:CH coeff}, $\beta(t) \in \mathbb{R}$ satisfies $\beta(1)=-c_0=A$. Asymptotic stability of $A$ and  condition $A > -\frac{1}{\gamma}$ imply that $-\frac{1}{\gamma}<\beta(1)<1$. Using this condition, it can be seen that, regardless of sign of $\beta(1)$,  \eqref{eq:condition on betai} is satisfied for $t=1$. Per Algorithm~\ref{Algorithm:algorithm1}, an upper bound on $t^*$ is therefore $m=0$, which implies that $t^*=0$.
\end{proof}
This theorem suggests that the MAS for some first order systems is particularly straightforward to construct.

We conclude this section with a few remarks. 

\begin{remark}
The Cayley Hamilton-based expansion in \eqref{eq:CH Expansion} provides only one possible expansion for $A^{m+1}$  in Theorem~\ref{theorem:unforced}. There may be other expansions that lead to smaller upper bounds for $t^*$.
\end{remark}

\begin{remark}
It has been shown in \cite{Gilbert_1991,freiheit2020overshoot} that $t^*$ does not change if the constraint limits $y_j^l$ and $y_j^u$ are multiplied by a scalar (i.e., the constraint set is radially scaled). From Algorithm \ref{Algorithm:algorithm1}, it can be seen that that this is also the case for the upper bound on $t^*$, because the ratio of $y_j^l$ and $y_j^u$ (through $\gamma$) is the only information about the constraint that is used by the algorithm. 
\end{remark}

\begin{remark}\label{remark:bound}
The Cayley Hamilton-based upper bound sheds light on the conditions under which $t^*$ may be large. Specifically, as the recursion in Theorem \ref{lemma:CH coeff} suggests, 
the upper bound on $t^*$ depends on the eigenvalues of $A$. If the spectral radius of $A$ is large (i.e., there is an eigenvalue close to the boundary of the unit disk), the upper bound 
 on $t^*$ (and likely $t^*$ itself) will be large. On the other hand, if the spectral radius is small, then the upper bound will be small (and thus $t^*$ must also be small). Thus, there is a relationship between $t^*$ and the spectral radius of $A$, which we examine numerically in Section \ref{sec:numerical comparison unforced}. An interesting implication of this is the following: if $A$ is obtained by discretizing a continuous-time model, the eigenvalues of $A$ approach the origin as the sampling period increases, leading to smaller values for the upper bound on $t^*$ and thus smaller values for $t^*$. Thus, there is also a relationship between $t^*$ and the sampling rate for the discretization.
\end{remark}

\subsection{Method 2: Lyapunov Level Sets}

The second method to find an upper bound on $t^*$ relies on Lyapunov level sets. We begin by defining two sets:
\begin{equation}\label{eq:X}
\mathbb{X}=\{x: Cx \in \mathbb{Y}\}
\end{equation}
which is the inverse image of $\mathbb{Y}$ in the $x$-space, and 
\begin{equation}\label{eq:On-1}
O_{n-1} = \{x_0: C A^t x_0 \in \mathbb{Y}, t=0,\ldots,n-1\}
\end{equation}
which is the set of all initial conditions such that the constraints are satisfied for the first $n$ time-steps. From these definitions, it follows that 
$$
O_\infty \subseteq O_{n-1} \subseteq \mathbb{X}
$$
The set $\mathbb{X}$ is not generally compact, but it is shown in \cite{Gilbert_1991} that, under Assumption \ref{assumption1}, $O_\infty$ and $O_{n-1}$ are. The compactness of $O_{n-1}$ is the main reason why it is employed in the analysis that follows. If $\mathbb{X}$  itself is compact, then $O_{n-1}$ may be replaced by $\mathbb{X}$ in the subsequent presentation.

Define the quadratic Lyapunov function
\begin{equation}\label{eq:lyapunov}
V(x) = x^TPx
\end{equation}
where $P=P^{\sf T} \succ 0$ is the solution of the discrete Lyapunov equation 
\begin{equation}\label{eq:lyap Q}
A^TPA-P=-Q
\end{equation}
for a given $Q=Q^{\sf T} \succ 0$. For each real number $r>0$, the $r$-th level set of $V(x)$, defined by
\begin{equation}\label{eq:lyapunov level sets}
\Omega_r = \{x: V(x) \leq r\},
\end{equation}
is an ellipsoid and is positively invariant 
with respect to the dynamics of \eqref{eq:unforced system}, see  \cite{Chen_1998,khalil2002nonlinear}. 

The key idea behind Method 2 is to first find two level sets of $V(x)$: one that is inscribed in $\mathbb{X}$ and one that circumscribes $O_{n-1}$. 
We determine the worst-case decay rate of $V(x)$ in time, and quantify the longest time it takes for the system state to enter the smaller level set (and thus satisfy the constraints for all future times) starting from anywhere in the larger level set (which is an outer approximation of $O_{n-1}$). This time provides an upper bound on $t^*$. See Fig. \ref{fig:lyapunov} for an illustration.

We now formally examine the above ideas, and then provide a method to find a suitable matrix $Q$ for the problem at hand.

\begin{figure}
\centering
\begin{tikzpicture}
    minimum height = 2cm] at (0,0) {};
\draw[pattern=north west lines, pattern color=blue] (-1,-1) rectangle (1,1);
    \draw[-] (-3,1) -- (3,1);
    \draw[-] (-3,-1) -- (3,-1);
     \draw [rotate=45](0,0) ellipse (1.17cm and 0.8cm);
      \draw [rotate=45](0,0) ellipse (2cm and 1.4cm);
    \draw[thick, ->] (0,-2) -- (0,2) node[above] {$x_2$};
    \draw[thick, ->] (-4,0) -- (4,0) node[right] {$x_1$};
\node at (2.6,0.6) 
    {$X$};
\node at (0.45,0.3) 
    {$O_{n-1}$};
\end{tikzpicture}
\caption{Figure illustrating the key idea behind the second method. The set $\mathbb{X}$ is defined in \eqref{eq:X} and is illustrated as the strip between the two horizontal lines. The set $O_{n-1}$ is defined in \eqref{eq:On-1} and is the hatched box. The smaller ellipse is the largest level set of the Lyapunov function $V(x)$ inscribed in $\mathbb{X}$. The larger ellipse is the smallest level set of $V(x)$ circumscribing $O_{n-1}$.}
\label{fig:lyapunov}
\end{figure}
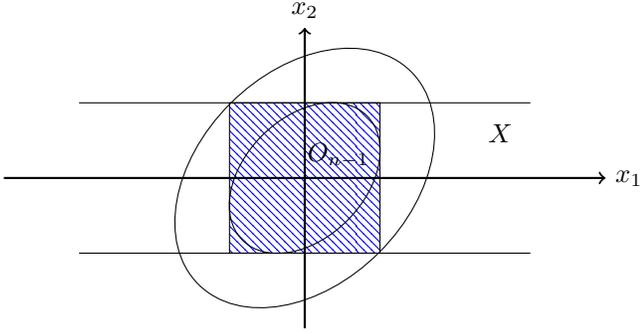

\begin{theorem}\label{theorem:lyapunov bound}
Consider system \eqref{eq:unforced system} with Lyapunov function \eqref{eq:lyapunov}--\eqref{eq:lyap Q} and constraint \eqref{eq:constraint}, and suppose Assumption \ref{assumption1} holds. Define $r_1, r_2 \in \mathbb{R}$ as follows: 
$$
r_1 = \max \{r:  \Omega_r \subset \mathbb{X}\}
$$
$$
r_2 = \min \{r: O_{n-1} \subset \Omega_r\}
$$ 
Then, we have that $r_2 \geq r_1$ and that an upper bound for $t^*$ is given by:
\begin{equation}\label{equ:m_method2}
m = \mathrm{floor}\bigg( \frac{\log (\frac{r_1}{r_2})}{\log( \sigma )}\bigg)
\end{equation}
where the {\normalfont floor} operator returns the previous largest integer,
\begin{equation}\label{eq:sigma}
\sigma = 1-\frac{\lambda_{min}(Q)}{\lambda_{max}(P)},
\end{equation}
and $\lambda_{min}$ and $\lambda_{max}$ denote the smallest and largest eigenvalues, respectively. 
\end{theorem}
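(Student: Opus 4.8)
The plan is to prove the equivalent statement $O_m=O_\infty$, where $O_k:=\{x_0:\ CA^tx_0\in\mathbb Y,\ t=0,\dots,k\}$; since $t^*$ is by definition the least index with $O_{t^*}=O_\infty$, this yields $t^*\le m$. The inclusion $O_\infty\subseteq O_m$ is immediate, so the real task is to show that every $x_0$ whose trajectory satisfies the constraints for the first $m+1$ steps satisfies them for all time, i.e.\ that the constraints at times $t\ge m+1$ are redundant. Throughout, well-definedness of $r_1>0$ and $r_2<\infty$ uses that $\mathbb X$ contains a neighborhood of the origin and that $O_{n-1}$ is compact, both already recalled.

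I would first assemble two ingredients. (i) \emph{Worst-case decay of $V$:} from \eqref{eq:lyap Q}, $V(Ax)-V(x)=-x^\top Qx\le-\lambda_{\min}(Q)\|x\|^2\le-\tfrac{\lambda_{\min}(Q)}{\lambda_{\max}(P)}V(x)$ for every $x$, so $V(x(t+1))\le\sigma V(x(t))$ and hence $V(x(t))\le\sigma^{t}V(x_0)$ with $\sigma$ as in \eqref{eq:sigma}; the same identity gives $V(Ax)\le V(x)$, so each $\Omega_r$ is positively invariant, and $P=Q+A^\top PA\succeq Q$ gives $\sigma\in[0,1)$, so $\log\sigma<0$. (ii) \emph{The relation $r_1\le r_2$:} by definition $\Omega_{r_1}\subseteq\mathbb X$, and positive invariance forces $A^tx\in\Omega_{r_1}\subseteq\mathbb X$ for all $t$ and all $x\in\Omega_{r_1}$, i.e.\ $\Omega_{r_1}\subseteq O_\infty$; chaining with $O_\infty\subseteq O_{n-1}\subseteq\Omega_{r_2}$ gives $\Omega_{r_1}\subseteq\Omega_{r_2}$, hence $r_1\le r_2$ and $\log(r_1/r_2)\le0$, so $m\ge0$.

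For the redundancy step I would take an arbitrary $x_0\in O_m$ and set $x(t)=A^tx_0$. For $t\le m$, $x(t)\in\mathbb X$ by hypothesis. Since $x_0\in O_m\subseteq O_{n-1}\subseteq\Omega_{r_2}$ (this uses $m\ge n-1$; see below), $V(x_0)\le r_2$, so $V(x(t))\le\sigma^{t}r_2$. Letting $t_0$ be the smallest integer with $\sigma^{t_0}r_2\le r_1$ and solving (using $\log\sigma<0$), $t_0=\big\lceil\log(r_1/r_2)/\log\sigma\big\rceil\le\big\lfloor\log(r_1/r_2)/\log\sigma\big\rfloor+1=m+1$. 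For $t\ge t_0$ we then have $V(x(t))\le r_1$, i.e.\ $x(t)\in\Omega_{r_1}\subseteq\mathbb X$. Since $t_0\le m+1$, the index ranges $\{t\le m\}$ and $\{t\ge t_0\}$ together cover all $t\ge0$, so $x(t)\in\mathbb X$ for every $t$, i.e.\ $x_0\in O_\infty$. Hence $O_m=O_\infty$ and $t^*\le m$.

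The step I expect to need the most care is $x_0\in O_m\subseteq\Omega_{r_2}$: it relies on $O_{n-1}\subseteq\Omega_{r_2}$ and hence on $m\ge n-1$, so if \eqref{equ:m_method2} returns a value below $n-1$ the bound should be read as $\max\{m,n-1\}$ (indeed, when $r_1=r_2$ the chain above already forces $\Omega_{r_1}=O_{n-1}=O_\infty$, giving $t^*\le n-1$ directly). The only other delicate point is the floor/ceiling bookkeeping that converts the geometric decay estimate into $t_0\le m+1$; everything else — the Lyapunov difference identity, positive invariance, and the set inclusions — is routine.
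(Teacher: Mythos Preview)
Your argument follows the same route as the paper's: establish the inclusions $\Omega_{r_1}\subseteq O_\infty\subseteq O_{n-1}\subseteq\Omega_{r_2}$ to get $r_1\le r_2$, derive the geometric decay $V(x(t))\le\sigma^t V(x_0)$, and use it to show that any trajectory starting in $\Omega_{r_2}$ reaches $\Omega_{r_1}$ within $m+1$ steps. Your version is in fact more carefully written than the paper's---you make explicit the framing $O_m=O_\infty$, you handle the floor/ceiling conversion precisely via $t_0=\lceil\log(r_1/r_2)/\log\sigma\rceil\le m+1$, and you flag the implicit requirement $m\ge n-1$ (needed for $O_m\subseteq O_{n-1}$) that the paper silently assumes.
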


\begin{proof}
First, note that $r_1$ exists because $\mathbb{X}$ is convex and non-empty and has the origin in its interior, and $r_2$ exists because $O_{n-1}$ is compact.  Second, note that $\Omega_{r_1} \subset O_\infty$ because  $\Omega_{r_1}$ is an invariant, constraint-admissible set and $O_\infty$ contains all such sets (see \cite{Gilbert_1991}). Thus, we have the following inclusions:  $\Omega_{r_1} \subset O_\infty \subset O_{n-1} \subset \Omega_{r_2}$, which means that $r_2 \geq r_1$, as required. 

The rest of the proof leverages two facts from linear algebra. First, the eigenvalues of a symmetric, positive-definite matrix are all real and positive. Second, for any $P=P^{\sf T}\succ 0$, we have that $\lambda_{min}(P) x^Tx \leq x^TPx\leq \lambda_{max}(P)x^Tx$, where $\lambda_{min}$ and $\lambda_{max}$ are well-defined thanks to the first fact. Given $V(x)$ in \eqref{eq:lyapunov}, the second fact allows us to write 
$$
-x^Tx \leq -\frac{V(x)}{\lambda_{max}(P)}
$$
which we use below. We now write the change in the Lyapunov function along the trajectories as:
$$
V(x(t+1))-V(x(t)) = (Ax(t))^T P (Ax(t)) - x(t)^TPx(t)  
$$
$$
= x(t)^T (A^TPA-P) x(t) = - x(t)^TQx(t)
$$
$$
\leq -\lambda_{min}(Q)x(t)^Tx(t) \leq -\frac{\lambda_{min}(Q)}{\lambda_{max}(P)} V(x(t))
$$
The above can be rewritten as: 
\begin{equation}\label{eq:lyapunov decay}
V(x(t+1)) \leq \sigma V(x(t)) \Rightarrow V(x(t)) \leq \sigma^t V(x(0)) 
\end{equation}
where $\sigma$ is as defined in the Theorem. 

One way to quantify an upper bound on $t^*$ is to find the longest time it takes for any initial state within $O_{n-1}$ to enter $\Omega_{r_1}$. Indeed, if $x(t) \in \Omega_{r_1}$, then $x(t) \in \mathbb{X}$ and thus $y(t) \in \mathbb{Y}$ for all future times due to the invariance of $\Omega_{r_1}$. However, instead of $O_{n-1}$, we consider initial states within $\Omega_{r_2}\supset O_{n-1}$, which allows for simple computations using the ellipsoidal mathematics at the expense of making the upper bound less tight. To this end, note that any $x(0) \in \Omega_{r_2}$ satisfies $V(x(0))\leq r_2$. Therefore, $V(x(t)) \leq \sigma^t r_2$. Furthermore, to ensure $x(t) \in \Omega_{r_1}$, we must have $V(x(t))\leq r_1$. Therefore, we set $
V(x(t)) \leq \sigma^t r_2 \leq r_1,
$
which implies that
$$
t > 
\frac{\log (\frac{r_1}{r_2})}{\log (\sigma)}
$$
Any $t$ satisfying the above will be an upper bound on  $t^*$. Since we are interested in the largest integer time-step after which the constraints are redundant, we take the floor of the right hand side of the above, which completes the proof.
\end{proof}

Procedures for computing $r_1$ and $r_2$ in the theorem are well-established, see, e.g., \cite{boyd2004convex}. Specifically, $r_1$ can be found by
\begin{equation}\label{eq:k1}
r_1 = \min_j \frac{(\min\{y_j^l, y_j^u\})^2}{c_j P^{-1} c_j^T}
\end{equation}
To find $r_2$, $O_{n-1}$ can first be converted from the H-representation (i.e., half-space description) to V-representation (i.e., vertex description) \cite{avis2002canonical}. Let the vertices of $O_{n-1}$ in the V-representation be denoted by $v_j$. Then, $r_2$ can be found by 
\begin{equation}\label{eq:k2}
r_2 = \max_j \{v_j^TPv_j\}
\end{equation}

\begin{remark}
Polynomial time algorithms exist that can convert a polytope from the H-representation to V-representation \cite{avis2002canonical}. However, these algorithms may be computationally intensive, e.g., in higher dimensions. This may hamper the use of Method~2, e.g., in situations where the upper bound on $t^*$ must be computed online. To remedy this, one can take one of the following three approaches, all of which result in a larger (less tight) upper bound on $t^*$: (1) replace $O_{n-1}$ by any compact superset with known vertices, if such superset is available. (2) Apply the algorithms in \cite{makarychev2022streaming} to directly find the bounding ellipsoid $\Omega_{r_2}$. These algorithms are applicable in situations where information about the size or aspect ratio of $O_{n-1}$ is available. (3) Starting from a ``template" polytope, $G$, with known H- and V-representations, find a scaling $\alpha$ such that $O_{n-1} \subset \alpha G$; this scaling can be found by solving efficient linear programs. Since the vertices of $G$ are known, the vertices of $\alpha G$ are also known, so $\Omega_{r_2}$ can be found using \eqref{eq:k2} by replacing $v_j$ with the vertices of $\alpha G$. A comparison of the scalability and computational aspects of these methods is an interesting topic for future research. 
\end{remark}

It remains to find a suitable $Q$ to solve for $P$ using \eqref{eq:lyap Q}. We approach this problem by analytically finding $Q$ that results in the smallest $\sigma$ in Theorem \ref{theorem:lyapunov bound} and thus the fastest decay rate of the Lyapunov function along the system trajectories (see Eq.~\eqref{eq:lyapunov decay}). Note that this is not necessarily the $Q$ that results in the globally minimal value for the upper bound on $t^*$. Other possible approaches for selecting $Q$ include solving an optimization problem to find a $Q$ that minimizes the upper bound; finding a $Q$ such that $\Omega_{r_1}$ has the largest volume; or finding the $Q$ such that $\Omega_{r_2}$ has the smallest volume. These approaches, however, require nonlinear program or second-order cone program solvers, which is what we seek to avoid. Furthermore, our numerical studies showed that choosing $Q$ to minimize $\sigma$ led to the best possible upper bound in most situations. We will illustrate this in Section \ref{sec:numerical comparison unforced}.

\begin{theorem}\label{thm:sigma}
The scalar $\sigma$ in Theorem \ref{theorem:lyapunov bound} satisfies $0 \leq \sigma < 1$. Furthermore, the matrix $Q$ that results in the smallest $\sigma$ is $Q=I$, and the corresponding value of $\sigma$ is $\sigma = \rho(A)^2$, where $\rho:=\max_i|\lambda_i(A)|$ is the spectral radius of $A$.
\end{theorem}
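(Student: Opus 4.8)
The plan is to treat the two assertions of the theorem in turn: first the range $0\le\sigma<1$, then the optimality of $Q=I$ together with the claimed value $\sigma=\rho(A)^2$. Both parts hinge on the Neumann-type series solution of the Lyapunov equation \eqref{eq:lyap Q} and on its monotonicity in the Loewner order. For the range: since $A$ is asymptotically stable and $Q\succ 0$, \eqref{eq:lyap Q} has a unique solution $P\succ 0$, so $\lambda_{min}(Q)>0$ and $\lambda_{max}(P)>0$, whence $\sigma=1-\lambda_{min}(Q)/\lambda_{max}(P)<1$ is immediate. For $\sigma\ge 0$ I would evaluate the identity $P-A^{\sf T}PA=Q$ on a unit eigenvector $v$ of $P$ associated with $\lambda_{max}(P)$, obtaining $\lambda_{min}(Q)\le v^{\sf T}Qv=\lambda_{max}(P)-(Av)^{\sf T}P(Av)\le\lambda_{max}(P)$, the last inequality because $P\succeq 0$; hence $\lambda_{min}(Q)/\lambda_{max}(P)\le 1$.

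For the optimal $Q$, two structural facts reduce the search to $Q=I$. First, \eqref{eq:lyap Q} is homogeneous — replacing $Q$ by $\alpha Q$, $\alpha>0$, replaces $P$ by $\alpha P$ and leaves $\sigma$ unchanged — so I may normalize $\lambda_{min}(Q)=1$, i.e. $Q\succeq I$, after which $\sigma=1-1/\lambda_{max}(P)$ is strictly increasing in $\lambda_{max}(P)$. Second, for stable $A$ the solution is $P(Q)=\sum_{k=0}^{\infty}(A^{\sf T})^{k}QA^{k}$, which is monotone: $Q_1\preceq Q_2$ forces $P(Q_1)\preceq P(Q_2)$ and therefore $\lambda_{max}(P(Q_1))\le\lambda_{max}(P(Q_2))$. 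Since every normalized $Q$ satisfies $Q\succeq I$, the smallest $\lambda_{max}(P)$, and hence the smallest $\sigma$, is attained at $Q=I$.

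It remains to compute $\sigma$ at $Q=I$, and this is where I expect the real difficulty. With $Q=I$ we have $\sigma=1-1/\lambda_{max}(P)$ and $P=\sum_{k\ge 0}(A^{\sf T})^{k}A^{k}$, so the claim $\sigma=\rho(A)^2$ is equivalent to $\lambda_{max}(P)=1/(1-\rho(A)^2)$. The lower bound $\lambda_{max}(P)\ge 1/(1-\rho(A)^2)$ is straightforward: taking a (complex) unit eigenvector $v$ of $A$ with $|\lambda|=\rho(A)$, the relations $Av=\lambda v$ and $A^{\sf T}PA=P-I$ give $(1-|\lambda|^2)\,v^{*}Pv=\|v\|^2$, so $v^{*}Pv=1/(1-\rho(A)^2)\le\lambda_{max}(P)$; this already delivers $\sigma\ge\rho(A)^2$ for the optimal $Q$. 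The reverse inequality, however, holds cleanly only when $A$ is normal: then in an orthonormal eigenbasis $P$ is diagonalized with entries $1/(1-|\lambda_i|^2)$, whose maximum is exactly $1/(1-\rho(A)^2)$. For a non-normal $A$ the transient growth of $\|A^k\|$ makes $\lambda_{max}(P)$ strictly larger, so the identity $\sigma=\rho(A)^2$ genuinely relies on a normality (in particular symmetry of $A$) hypothesis; without it one obtains only $\sigma\ge\rho(A)^2$. Consequently, the decisive step is not the first two parts — which are routine once the series representation and Loewner monotonicity are invoked — but pinning $\lambda_{max}(P)$ to $1/(1-\rho(A)^2)$, and I would either add the normality assumption or restate the conclusion as $\sigma\ge\rho(A)^2$, with equality exactly when there is no transient amplification along the dominant eigendirection.
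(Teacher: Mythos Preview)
Your treatment of the first two parts is essentially the paper's argument. For $\sigma<1$ both of you use $\lambda_{min}(Q),\lambda_{max}(P)>0$; for $\sigma\ge 0$ the paper appeals loosely to the nonnegativity of $V$ along trajectories in \eqref{eq:lyapunov decay}, whereas your eigenvector computation is a cleaner direct argument, but the content is the same. For the optimality of $Q=I$ the paper does exactly what you propose: normalize so that $\lambda_{min}(Q)=1$ (hence $Q\succeq I$), use the series $P(Q)=\sum_{k\ge 0}(A^{\sf T})^kQA^k$, and conclude $P(Q)\succeq P(I)$ by Loewner monotonicity.

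Your reservations about the identity $\sigma=\rho(A)^2$ are well founded, and in fact the paper's own proof of this part contains precisely the two lapses you anticipate. The paper sets $\bar A=A^{\sf T}A$ and asserts, via the ``spectral mapping theorem,'' that $\lambda_i(P)=\sum_{k\ge 0}\lambda_i(\bar A)^k$ and that $\lambda_i(\bar A)=\lambda_i(A)^2$. The first step tacitly replaces $(A^{\sf T})^kA^k$ by $(A^{\sf T}A)^k$, and the second equates eigenvalues of $A^{\sf T}A$ with squared eigenvalues of $A$; both identifications require $A$ to be normal. Your eigenvector argument gives the correct one-sided bound $\lambda_{max}(P)\ge 1/(1-\rho(A)^2)$, hence $\sigma\ge\rho(A)^2$, with equality exactly in the normal case. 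A small Jordan block such as
\[
A=\begin{pmatrix}0.5 & 1\\ 0 & 0.5\end{pmatrix}
\]
already breaks the claimed equality: solving $A^{\sf T}PA-P=-I$ gives $\lambda_{max}(P)\approx 4.54$, so $\sigma\approx 0.78$, while $\rho(A)^2=0.25$. So there is no missing idea on your side; rather, you have correctly diagnosed that the third assertion, as stated and as ``proved'' in the paper, needs a normality hypothesis (or should be weakened to $\sigma\ge\rho(A)^2$).
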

\begin{proof}
To prove the first part, note that $V(x(t+1)) \geq 0$ and $V(x(t))\geq 0$ in Eq.~\eqref{eq:lyapunov decay}. Thus, $\sigma \geq 0$. To show $\sigma < 1$, note that $\lambda_{min}(Q), \lambda_{max}(P) > 0$. Thus, $\frac{\lambda_{min}(Q)}{\lambda_{max}(P)} > 0$, which implies that $\sigma = 1-\frac{\lambda_{min}(Q)}{\lambda_{max}(P)} < 1$.

To prove the second part, we must find $Q$ to maximize $\frac{\lambda_{min}(Q)}{\lambda_{max}(P)}$. To begin, 
note that by linearity of the Lyapunov equation in \eqref{eq:lyap Q}, normalizing $Q$ by any scalar normalizes $P$ by the same scalar. Therefore, without loss of generality, one can normalize $Q$ such that $\lambda_{min}(Q)=1$, which implies that $Q \succeq I$ or $Q-I\succeq 0$. Since $\lambda_{min}(Q)=1$, it now suffices to find a $Q$ to minimize $\lambda_{max}(P)$.

It is known that the solution, $P$, of the Lyapunov equation \eqref{eq:lyap Q} can be expressed as \cite{Chen_1998}:
\begin{equation}\label{eq:powerseriesLyap}
P(Q) = \sum_{t=0}^{\infty} (A^T)^t Q A^t
\end{equation}
We can thus write:
$$
P(Q)-P(I) = \sum_{t=0}^{\infty} (A^T)^t (Q-I) A^t 
$$
Since $Q-I \succeq 0$, we have that $P(Q)-P(I) \succeq 0$ or equivalently, $P(Q) \succeq P(I)$. Thus $\lambda_{max}(P(Q)) \geq \lambda_{max}(P(I))$ so to minimize the largest eigenvalue of $P$, one must take $Q=I$. 

 Finally, to show that the choice of $Q=I$ leads to $\sigma = \rho(A)^2$, we again leverage \eqref{eq:powerseriesLyap} and redefine $\bar{A} = A^T A$. We then apply the spectral mapping theorem from linear algebra to conclude that $\lambda_i (P) = \sum_{t=0}^{\infty} \lambda_i(\bar{A})^t = \frac{1}{1-\lambda_i(\bar{A})}$. Since $\lambda_i(\bar{A}) = (\lambda_i(A))^2$, we have that $\lambda_{max}(P) = \frac{1}{1-\rho(A)^2}$, which implies that $\sigma = \rho(A)^2$.
\end{proof} 

The above results  lead to Algorithm \ref{Algorithm:algorithm2} for finding an upper bound for $t^*$. 

\begin{algorithm}
 \caption{Compute upper bound on $t^*$ using Method 2}
     \hspace*{\algorithmicindent} \textbf{Input: $A, C, y_j^l, y_j^u$} \\
    \hspace*{\algorithmicindent} \textbf{Output: $m$ such that $t^*\leq m$} \\
 \begin{algorithmic}[1]
  \STATE Compute $P$ using \eqref{eq:lyap Q} with $Q=I$. Compute $\sigma = \rho(A)^2$
 \STATE Compute $r_1$ using \eqref{eq:k1}. 
 \STATE Construct $O_{n-1}$ as in \eqref{eq:On-1}, convert to V-representation, and compute $r_2$ using \eqref{eq:k2}. 
 \STATE  Compute $m$ using expression \eqref{equ:m_method2}.
 
 \end{algorithmic} 
 \label{Algorithm:algorithm2}
 \end{algorithm}

A numerical comparison between the two methods is provided in the next subsection.
 
\subsection{Numerical Comparison}\label{sec:numerical comparison unforced}

This section presents a comparative analysis of the upper bounds provided by Algorithm~\ref{Algorithm:algorithm1}
for Method~1 (i.e., the power series-based method) and Algorithm~\ref{Algorithm:algorithm2} for Method~2 (i.e., the Lyapunov-based method). Since this comparison cannot be carried out analytically, we conduct a Monte Carlo study of randomly-generated systems using Matlab 2020b.

To generate each random system, we first randomly generate $n$, the order of the system, by sampling the uniform distribution between 1 and 8. We then generate a state-space model with that order by using Matlab's {\tt drss} command, which returns Lyapunov stable systems with possibly repeated poles. To ensure Assumption \ref{assumption1} is robustly satisfied, we reject systems for which the spectral radius is greater than 0.999 and the smallest singular value of the observability matrix is less than 0.0001. For simplicity, we assume a single output (i.e., $q=1$) and symmetric constraints $y_1^u = y_1^l=1$. 

Using the above methodology, we generate a total of 16,000 random systems. We assume that the input satisfies $u=0$, which makes each system have the form \eqref{eq:unforced system}. For each system, we compute $t^*$ using the algorithm described in \cite{Gilbert_1991}. We also compute the upper bounds on $t^*$ using Algorithms \ref{Algorithm:algorithm1} and \ref{Algorithm:algorithm2}. We denote these upper bounds by $m_1$ and $m_2$ respectively, where the subscript refers to the respective method. To compare the upper bounds against the true value of $t^*$, we construct the histograms of $m_i -t^*$, $i=1,2$, as seen in Fig. \ref{fig:method1}. In addition to the histograms, a point by point comparison between the two methods is provided in Fig. \ref{fig:comparison}. As can be seen from the data, Method 1 performs well overall, with a median of 0 (i.e., for at least half of the random systems, the upper bound is tight). Furthermore, interestingly, Method 1 outperforms Method 2 \underline{in all cases}. Investigation of this observation is an interesting topic for future research.

\begin{figure}
\centering
\includegraphics[width=\columnwidth]{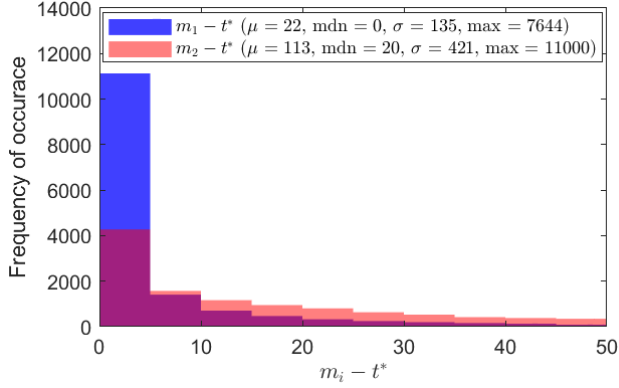}
\caption{Histograms of $m_1-t^*$ and $m_2-t^*$ (i.e., the tightness of each upper bound) obtained using our Monte Carlo study. In the legend, $\mu$, $\sigma$, and mdn refer to the mean, standard deviation, and median, respectively.}
\label{fig:method1}
\end{figure}


\begin{figure}
\centering
\includegraphics[width=0.8\columnwidth]{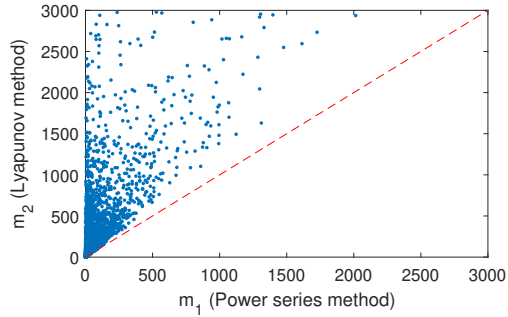}
\caption{Comparison between the upper bound provided by Methods 1, $m_1$, and by Method 2, $m_2$. Interestingly, $m_1 \leq m_2$ in all cases.}
\label{fig:comparison}
\end{figure}

From these figures, it may appear that the upper bounds are too conservative for some systems, which, per Remark~\ref{remark:bound}, could be attributed to the large spectral radius of those systems. This can be easily confirmed with our Monte Carlo study, as seen in Fig. \ref{fig:error vs rho} for Method 1. To investigate further, we normalize both $t^*$ and its upper bound $m_1$ to allow for a fair comparison between the different systems. The normalization is achieved by scaling $t^*$ and $m_1$ by $\log (\rho)$, where $\rho = \max_i (|\lambda_i(A)|)$ is the spectral radius. Taking  logarithms is inspired by the fact that continuous-time poles and discrete-time poles are related through $z=e^{sT_s}$, where $T_s$ is the sample time. Assuming $T_s=1$ to allow for direct comparison between the systems, we obtain $s=\log(z)$. Thus, scaling by $\log (\rho)$ normalizes each $t^*$ or $m$ by the ``continuous-time time constant" of the system. The results are reported in Fig. \ref{fig:spectral}. As can be seen, in the normalized coordinates, the spread is narrow and the upper bound is not as conservative as it appeared before. Similar plots can be generated for Method 2.

\begin{figure}
\centering
  \subfloat[$m_1-t^*$ vs. spectral radius, $\rho$. The larger the $\rho$, the more conservative the upper bound may be.\label{fig:error vs rho}]{%
\includegraphics[width=0.47\columnwidth]{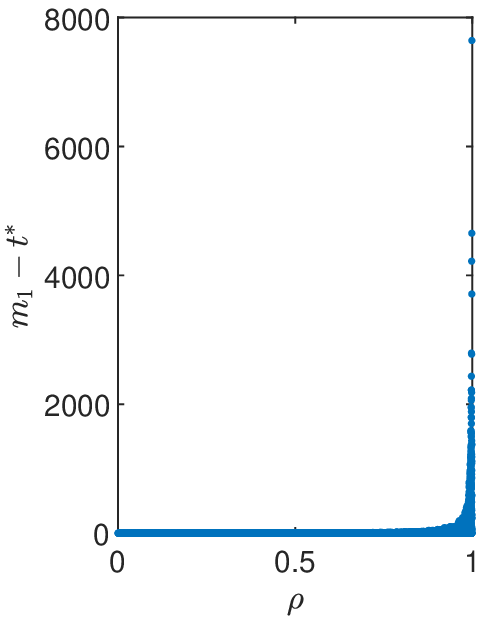}}
\hfill
  \subfloat[Comparison of $t^*$  vs. $m_1$, each scaled by the logarithm of the spectral radius of $A$.\label{fig:spectral}]{%
\includegraphics[width=0.47\columnwidth]{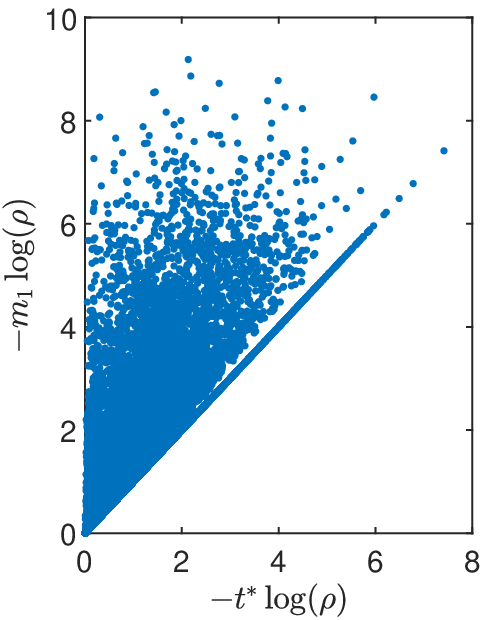}}

 \caption{Analysis of the upper bounds obtained using Method 1.}
 \label{fig:study of large bounds}
\end{figure}

Next, recall that in the Lyapunov-based approach of Method 2, we choose $Q=I$ and compute $P$ using Lyapunov equation \eqref{eq:lyap Q}, which results in the smallest possible value of $\sigma$, see Theorem~\ref{thm:sigma}. However, this choice of $Q$ may not necessarily lead to the smallest upper bound that can be obtained using the Lyapunov-based approach. To investigate the optimality of $Q=I$, we formulate the following nonlinear optimization problem, whose objective function is the upper bound on $t^*$ (see \eqref{equ:m_method2}) but without the floor operator:
$$
\min_{Q} \frac{\log (\frac{r_1}{r_2})}{\log( \sigma)}
$$
subject to $Q=Q^{\sf T} \succ 0$, and the following equality constraints: $r_1(P)$ and $r_2(P)$ from \eqref{eq:k1}--\eqref{eq:k2}, $\sigma(P,Q)$ from \eqref{eq:sigma}, and $P(Q)$ from \eqref{eq:lyap Q}. For each of the 16,000 random systems, we solve this problem, starting with the initial guess of $Q=I$, using Matlab's ``fmincon" function until a local minimum is reached. The optimal upper bound on $t^*$, which is what we seek to find, is  obtained by applying the floor operator to the  objective function value at the optimum. Based on the results, we make two interesting observations. First, the upper bound obtained using this optimization problem is still larger than that obtained using Method 1 for all random systems considered. Second, the upper bounds obtained using $Q=I$ and the one obtained using the $Q$ from the above optimization problem were identical for 15,764 (i.e. 98.5\%) of the systems, which provides additional justification for the efficacy of $Q=I$. The systems in which the two upper bounds differed had large spectral radii, leading to large values of $\sigma$ in \eqref{eq:sigma} and thus large sensitivity of the objective function to problem data.  


 In the above Monte Carlo study, symmetric constraints were assumed. We conclude this section with a numerical example to illustrate the effects of asymmetry on $t^*$ and its upper bounds $m_1$ and $m_2$. Consider system \eqref{eq:unforced system} with one output and the following system matrices:
$$
A=\left[ \begin{array}{ccc} 0.9 & -0.25 & 1 \\ 0.25 & 0.9 & 0 \\ 0 & 0 & -0.98\end{array} \right],\,\,  C=[-1 \,\,\,\, 1\,\,\,\, 0.5]$$
We let $y_1^u=1$ and vary $y_1^l$ (the lower constraint) from 0.1 to 2. Within this range, the constraint set is symmetric for $y_1^l=1$ and asymmetric otherwise. For each $y_1^l$, we compute $t^*$ and its upper bounds using Algorithms~1 and 2. The results are shown in Fig.  
\ref{fig:asymmetrystudy}. As can be seen, asymmetry tends to increase $t^*$ and its upper bounds, which aligns with the theoretical results in the previous section through Eqs. \eqref{eq:gamma} and \eqref{eq:k1}. This figure also illustrates that, similar to the symmetric case examined before, $m_1$ is a tighter bound than $m_2$ even in the asymmetric case.

\begin{figure}
\centering
\includegraphics[width=0.8\columnwidth]{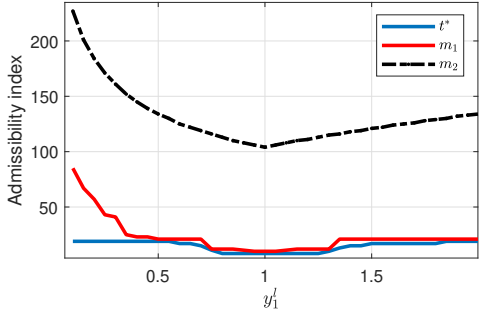}
\caption{Effect of asymmetry on $t^*$ and its upper bounds.}
\label{fig:asymmetrystudy}
\end{figure}

\section{Main Results: Systems with Constant Input}\label{sec:forced}

We now extend the results in the previous section to the forced system \eqref{eq:forced system} with constraint \eqref{eq:constraint}. As explained in Section~\ref{sec:intro}, the MAS for this system may not be finitely determined. However, by tightening the steady-state constraint, a finitely-determined inner approximation, denoted by  $\widetilde{O}_\infty$, can be obtained, see Eq. \eqref{eq:MAS forced}. For a given steady-state margin $\epsilon>0$, our goal is to obtain upper bounds on $t^*$ such that all constraints after time-step $t^*$ are guaranteed to be redundant in \eqref{eq:MAS forced}. Similar to the unforced case, we assume that:
\begin{assumption}
System \eqref{eq:forced system} is asymptotically stable and the pair $(A,C)$ is observable. It is assumed that the input $u$ is constant for all time. Furthermore, the constraint set in \eqref{eq:constraint} is described by \eqref{eq:constraint detailed}.
\label{assumption2}
\end{assumption}

\subsection{Method 1: Matrix Power Series}

Similar to the case of unforced systems, the general idea behind this method is finding an expansion of $A^t$, with ``sufficiently small" coefficients, in terms of lower powers of $A$. The key difference with the unforced case is that the origin is no longer the equilibrium of the forced system, so we must perform a change of coordinates to shift the equilibrium to the origin. Furthermore, recall from \eqref{eq:MAS forced} that the steady-state constraint is tightened by $(1-\epsilon)$, which introduces additional complexities. 

Under Assumption \ref{assumption2}, the equilibrium of \eqref{eq:forced system} is given by 
$$
x(\infty) = (I-A)^{-1}Bu, \quad y(\infty) = H_0 u
$$
where
$$
H_0 = C(I-A)^{-1}B+D
$$
is the DC gain from $u$ to $y$. Note that the matrix inverse exists thanks to the asymptotic stability of $A$. We define a new state vector $z(t)$ to shift the equilibrium to the origin:
$$
z(t) = x(t) - (I-A)^{-1}Bu
$$
In the new coordinate system, the dynamics are described by:
\begin{equation}\label{eq:forced system new coordinates}
\begin{aligned}
z(t+1) = Az(t)\\
y(t) = Cz(t) + H_0 u
\end{aligned}
\end{equation}
The output thus evolves according to 
$$
y(t) = CA^tz(0) + H_0 u
$$
We now state the main result of this section.

\begin{theorem}\label{theorem:forced}
Consider system \eqref{eq:forced system} with constraint \eqref{eq:constraint}, and suppose 
Assumption \ref{assumption2} holds. Suppose there exists an integer $m$, $m \geq 0$, such that $A^{m+1}$ can be expanded as in \eqref{eq:expansion}, 
where $\alpha_i$ satisfy:
\begin{equation}\label{eq:condition on alphai forced}
\Big(1+\gamma(1-\epsilon)\Big) \sum_{\alpha_i > 0} \alpha_i - \Big(\gamma +(1-\epsilon)\Big)
\sum_{\alpha_i < 0} \alpha_i  \leq\epsilon
\end{equation}
and $\gamma$ is defined in \eqref{eq:gamma}.
Then, $t^* \leq m$.
\end{theorem}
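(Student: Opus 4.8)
The plan is to replicate the inductive argument of Theorem~\ref{theorem:unforced} in the shifted coordinates \eqref{eq:forced system new coordinates}, while carefully tracking the tightened terminal constraint. Writing $w := H_0 u$ for the steady-state output (so that $C A^t z(0) = y(t) - w$), I would reduce the claim to the following: for every $(x_0,u)$ with $w \in (1-\epsilon)\mathbb{Y}$ and $y(t)\in\mathbb{Y}$ for $t \le m$, one has $y(t)\in\mathbb{Y}$ for all $t\ge m+1$, i.e., the latter inequalities are redundant in \eqref{eq:MAS forced}; as in Theorem~\ref{theorem:unforced}, I would prove this by induction on $t$.

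The first thing I would set up is an elementary pair of containments that absorb the worst case over admissible steady states: if $y_j(t)\in[-y_j^l,y_j^u]$ and $w_j\in[-(1-\epsilon)y_j^l,(1-\epsilon)y_j^u]$, then $C_jA^t z(0)=y_j(t)-w_j$ lies in the $u$-independent box $[-\hat{a}_j,\hat{b}_j]$ with $\hat{a}_j := y_j^l + (1-\epsilon)y_j^u$ and $\hat{b}_j := y_j^u + (1-\epsilon)y_j^l$, whereas if $C_jA^t z(0)\in[-\epsilon y_j^l,\epsilon y_j^u]$ then $y_j(t)\in[-\epsilon y_j^l-(1-\epsilon)y_j^l,\, \epsilon y_j^u+(1-\epsilon)y_j^u]=[-y_j^l,y_j^u]$ for any admissible $w_j$. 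The algebraic core would then be the claim that, under \eqref{eq:condition on alphai forced}, $\xi_0,\dots,\xi_m\in[-\hat{a}_j,\hat{b}_j]$ implies $\sum_{i=0}^m\alpha_i\xi_i\in[-\epsilon y_j^l,\epsilon y_j^u]$. To prove it I would split the sum by the sign of $\alpha_i$ exactly as in Theorem~\ref{theorem:unforced}, reducing to $\hat{b}_j\sum_{\alpha_i>0}\alpha_i - \hat{a}_j\sum_{\alpha_i<0}\alpha_i\le\epsilon y_j^u$ and $\hat{a}_j\sum_{\alpha_i>0}\alpha_i - \hat{b}_j\sum_{\alpha_i<0}\alpha_i\le\epsilon y_j^l$; dividing the first by $y_j^u>0$ and the second by $y_j^l>0$, substituting the definitions of $\hat{a}_j,\hat{b}_j$, and bounding $\hat{b}_j/y_j^u\le 1+\gamma(1-\epsilon)$, $\hat{a}_j/y_j^l\le 1+\gamma(1-\epsilon)$, $\hat{a}_j/y_j^u\le \gamma+(1-\epsilon)$, $\hat{b}_j/y_j^l\le \gamma+(1-\epsilon)$ via \eqref{eq:gamma} (using $\sum_{\alpha_i>0}\alpha_i\ge 0$ and $-\sum_{\alpha_i<0}\alpha_i\ge 0$), both inequalities collapse into precisely \eqref{eq:condition on alphai forced}.

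With the claim in hand, the induction would run just as in Theorem~\ref{theorem:unforced}. For the base case $t=m+1$, I would expand $C_jA^{m+1}z(0) = \sum_{i=0}^m\alpha_i\,C_jA^iz(0)$ via \eqref{eq:expansion}; since $y(t)\in\mathbb{Y}$ for $t\le m$ and $w\in(1-\epsilon)\mathbb{Y}$, each $C_jA^iz(0)\in[-\hat{a}_j,\hat{b}_j]$, so the claim gives $C_jA^{m+1}z(0)\in[-\epsilon y_j^l,\epsilon y_j^u]$ and hence $y_j(m+1)\in[-y_j^l,y_j^u]$; as $j$ is arbitrary, $y(m+1)\in\mathbb{Y}$. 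For the inductive step, assuming $y(t)\in\mathbb{Y}$ for all $t\le k$ with $k\ge m+1$, I would write $A^{k+1}=A^{m+1}A^{k-m}$ to obtain $C_jA^{k+1}z(0) = \sum_{i=0}^m\alpha_i\,C_jA^{i+k-m}z(0)$ with each index $i+k-m\in\{0,\dots,k\}$; for each such index $y(i+k-m)\in\mathbb{Y}$ (the standing assumption when the index is $\le m$, the induction hypothesis otherwise), so $C_jA^{i+k-m}z(0)\in[-\hat{a}_j,\hat{b}_j]$ and the claim again yields $y(k+1)\in\mathbb{Y}$. This would close the induction and prove $t^*\le m$.

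The main obstacle is the algebraic claim of the second paragraph; the key realization is that the closed-form, constraint-independent condition \eqref{eq:condition on alphai forced} emerges only after (i) replacing the $u$-dependent output intervals by the fixed boxes $[-\hat{a}_j,\hat{b}_j]$ for the past outputs and $[-\epsilon y_j^l,\epsilon y_j^u]$ for the new output, and (ii) identifying the coefficients $1+\gamma(1-\epsilon)$ and $\gamma+(1-\epsilon)$ with the worst-case values of the ratios $\hat{b}_j/y_j^u,\,\hat{a}_j/y_j^l$ and $\hat{a}_j/y_j^u,\,\hat{b}_j/y_j^l$, respectively. Everything else — the coordinate shift, the two halves of the induction, and the reduction of the bound $t^*\le m$ to redundancy of the post-$m$ inequalities — is routine and mirrors the unforced case.
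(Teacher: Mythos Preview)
Your proposal is correct and follows essentially the same inductive argument as the paper: both shift to the $z$-coordinates, expand $A^{m+1}$ via \eqref{eq:expansion}, split the sum by the sign of $\alpha_i$, use the decomposition $A^{k+1}=A^{m+1}A^{k-m}$ for the induction step, and arrive at the identical pair of inequalities $\hat b_j\sum_{\alpha_i>0}\alpha_i-\hat a_j\sum_{\alpha_i<0}\alpha_i\le\epsilon y_j^u$ and its counterpart. The only organizational difference is that you bound $C_jA^i z(0)=y_j(i)-w_j$ once and for all in the box $[-\hat a_j,\hat b_j]$ and then target $C_jA^{m+1}z(0)\in[-\epsilon y_j^l,\epsilon y_j^u]$, whereas the paper adds and subtracts $\sum_i\alpha_i H_0u$ so as to bound $y_j(m+1)$ directly in terms of $y_j(i)$ and $H_0u$; after regrouping, the two computations coincide term-for-term.
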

\begin{proof}
The proof is similar to that of Theorem \ref{theorem:unforced} with some differences, which we highlight. As in Theorem 1, we use mathematical induction to prove that $y(t)\in \mathbb{Y}$ for $t \leq m$ implies that $y(t)\in \mathbb{Y}$ for $t \geq m+1$. For the sake of brevity, we only discuss the base case of the induction argument, as the proof of the induction step is similar.

For the induction base case,  we assume that $y(t)\in \mathbb{Y}$ for $t \leq m$ and show that $y(m+1) \in \mathbb{Y}$. To show this, note that the $j$-th output can be written as:
$$
y_j(m+1) = C_j A^{m+1} z_0 + H_0 u = \sum\limits_{i=0}^{m} \alpha_i (C_j A^{i} x_0) + H_0u
$$
We add and subtract $\sum_{i=0}^{m} \alpha_i H_0 u$ to this expression to obtain:
$$
y_j(m+1) = \sum\limits_{i=0}^{m} \alpha_i (C_j A^{i} x_0+H_0u) + H_0u-\sum_{i=0}^{m} \alpha_i H_0 u
$$
The assumption $y(t) \in \mathbb{Y}$ for $t\leq m$ implies that $C_j A^ix_0 + H_0 u$ in the first sum satisfies: $-y_j^l \leq C_j A^ix_0 + H_0u\leq y_j^u$. Furthermore, the assumption $y(\infty) \in (1-\epsilon) \mathbb{Y}$ implies that     $-(1-\epsilon) y_j^l \leq H_0 u \leq (1-\epsilon) y_j^u$. Thus, breaking up the sum into positive and negative values of $\alpha_i$ as we did in the proof of Theorem \ref{theorem:unforced},  we obtain the following bounds on $y_j(m+1)$:
 $$
 -y_j^l (1-\epsilon) - y_j^l  \sum_{\alpha_i>0}\alpha_i + y_j^u \sum_{\alpha_i<0}\alpha_i - y_j^u(1-\epsilon) \sum_{\alpha_i>0}\alpha_i +
 $$
 $$
 y_j^l(1-\epsilon) \sum_{\alpha_i<0}\alpha_i \leq  \boxed{y_j(m+1)} \leq  y_j^u (1-\epsilon) + y_j^u  \sum_{\alpha_i>0}\alpha_i - 
 $$
 $$
y_j^l \sum_{\alpha_i<0}\alpha_i + y_j^l(1-\epsilon) \sum_{\alpha_i>0}\alpha_i - y_j^u(1-\epsilon) \sum_{\alpha_i<0}\alpha_i 
 $$
 To ensure that $-y_j^l \leq y(m+1)\leq y_j^u$, we set the left inequality to be greater than $-y_j^l$ and the right inequality to be smaller than $y_j^u$. We then divide the left inequality by $-y_j^l$ and the right inequality by $y_j^u$ and simplify terms to obtain:
$$
   \left(1+\frac{y_j^l}{y_j^u}(1-\epsilon)\right) \sum_{\alpha_i>0}\alpha_i - \left(\frac{y_j^l}{y_j^u}+(1-\epsilon)\right)\sum_{\alpha_i<0}\alpha_i \leq \epsilon
$$
$$
   \left(1+\frac{y_j^u}{y_j^l}(1-\epsilon)\right) \sum_{\alpha_i>0}\alpha_i - \left(\frac{y_j^u}{y_j^l}+(1-\epsilon)\right)\sum_{\alpha_i<0}\alpha_i \leq \epsilon
$$
Both of these inequalities hold since they are implied by \eqref{eq:condition on alphai forced}. Thus, $y(m+1)\in\mathbb{Y}$.
\end{proof}

\begin{remark}\label{remark:forced symmetric}
In the case of symmetric constraints, the expression in Theorem \ref{theorem:forced} can be further simplified. Specifically, suppose  that $y_j^l = y_j^u, \forall j$ in \eqref{eq:constraint detailed}. Then, $\gamma = 1$ and so condition \eqref{eq:condition on alphai forced} becomes:
\begin{equation}\label{eq:forced condition symmetric}
\sum_{i} |\alpha_i| \leq \frac{\epsilon}{2-\epsilon}
\end{equation}
Note that the right hand side tends to 0 and 1, as $\epsilon$ tends to 0 and 1, respectively. 
\end{remark}

As in the unforced case, the Cayley-Hamilton based expansion of Section \ref{sec:unforced} can be employed to obtain the expansion in Theorem \ref{theorem:forced} and thus obtain an upper bound on $t^*$. An algorithm similar to Algorithm \ref{Algorithm:algorithm1} can be constructed for this purpose, wherein condition \eqref{eq:condition on betai} is replaced with:
\begin{equation}\label{eq:condition on betai forced}
\Big(1+\gamma(1-\epsilon)\Big) \sum_{\beta_i(t) > 0} \beta_i(t) - \Big(\gamma +(1-\epsilon)\Big)\sum_{\beta_i(t) < 0} \beta_i(t)  \leq\epsilon
\end{equation}
The complete algorithm is provided in Algorithm \ref{Algorithm:algorithm3}.
\begin{algorithm}
 \caption{Compute upper bound on $t^*$ using Method 1 for the case of systems with constant input.}
     \hspace*{\algorithmicindent} \textbf{Input: $A, y_j^l, y_j^u, \epsilon$} \\
    \hspace*{\algorithmicindent} \textbf{Output: $m$ such that $t^*\leq m$} \\
 \begin{algorithmic}[1]
 \STATE Compute the Cayley Hamilton coefficients, $c_i$, using \eqref{eq:characteristic polynomial}, and $\gamma$ using \eqref{eq:gamma}.
 \STATE Set $t=n$ and initialize $\beta(n)$ as in Theorem \ref{lemma:CH coeff}.
 \STATE  If $\beta(t)$ satisfies \eqref{eq:condition on betai forced}, then: $m=t-1$, STOP.
 \STATE Increment $t$ by 1. Compute $\beta(t)$ using \eqref{eq:CH recursion}. Go to step 3.
 \end{algorithmic} 
 \label{Algorithm:algorithm3}
 \end{algorithm}

Similar to the unforced case, the above results allow us to simplify the computation of MAS for some first order systems:
\begin{theorem}
Consider \eqref{eq:forced system} with $A\in\mathbb{R}$ (i.e., a first order system) and assume Assumption \ref{assumption2} holds. If 
\begin{equation}\label{eq:firstorderforced}
\frac{-\epsilon}{\gamma+(1-\epsilon)}\leq A<\frac{\epsilon}{1+\gamma(1-\epsilon)}
\end{equation}
then $t^*=0$. 
\end{theorem}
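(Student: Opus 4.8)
The plan is to mirror the argument used for the first-order unforced case proved earlier: specialize the Cayley--Hamilton construction to $n=1$, invoke Algorithm~\ref{Algorithm:algorithm3} (the forced analogue of Algorithm~\ref{Algorithm:algorithm1}), and verify that its stopping condition \eqref{eq:condition on betai forced} is already satisfied at $t=1$. This forces the returned bound to be $m=t-1=0$; since Theorem~\ref{theorem:forced} guarantees $t^*\le m$ and $t^*\ge 0$ always, we then obtain $t^*=0$.

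First I would record the scalar specialization. For $A\in\mathbb{R}$ the characteristic polynomial is $\Delta(s)=s-A$, so $c_0=-A$, and Theorem~\ref{lemma:CH coeff} gives the one-dimensional coefficient sequence with $\beta(1)=-c_0=A$; in particular the single coefficient at $t=1$ is $\beta_0(1)=A$. Next I would substitute this into \eqref{eq:condition on betai forced} at $t=1$ and split on the sign of $A$. If $A>0$, the negative-coefficient sum is empty and the condition becomes $\bigl(1+\gamma(1-\epsilon)\bigr)A\le\epsilon$, i.e.\ $A\le\frac{\epsilon}{1+\gamma(1-\epsilon)}$, which is implied by the strict upper bound in \eqref{eq:firstorderforced}. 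If $A<0$, the positive-coefficient sum is empty and the condition becomes $-\bigl(\gamma+(1-\epsilon)\bigr)A\le\epsilon$, i.e.\ $A\ge\frac{-\epsilon}{\gamma+(1-\epsilon)}$, which is exactly the lower bound in \eqref{eq:firstorderforced}. If $A=0$, the left-hand side of \eqref{eq:condition on betai forced} is $0\le\epsilon$. Hence in all cases \eqref{eq:condition on betai forced} holds at $t=1$, Algorithm~\ref{Algorithm:algorithm3} terminates immediately with $m=0$, and the conclusion follows.

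This is a routine scalar specialization, so there is no genuine obstacle; the only points deserving a line of care are (i) carrying out the sign case analysis against \eqref{eq:condition on betai forced} precisely, so that it is clear why a strict inequality appears on the upper side and a non-strict one on the lower side, and (ii) checking that the interval in \eqref{eq:firstorderforced} is nonempty and contained in the stability region $(-1,1)$ demanded by Assumption~\ref{assumption2} — this is immediate since $\epsilon\in(0,1)$ and $\gamma\ge 1$ give $0<\frac{\epsilon}{1+\gamma(1-\epsilon)}<\epsilon<1$ and $-1<-\epsilon<\frac{-\epsilon}{\gamma+(1-\epsilon)}<0$. I would also close with a short remark that, exactly as in the unforced case, this shows $\widetilde{O}_\infty$ for such first-order systems needs no prediction steps beyond the pointwise-in-time constraint together with the tightened steady-state constraint.
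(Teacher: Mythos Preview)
Your proposal is correct and follows essentially the same argument as the paper: specialize the Cayley--Hamilton recursion to $n=1$ so that $\beta(1)=A$, verify by a sign case analysis that \eqref{eq:condition on betai forced} holds at $t=1$ under the hypothesis \eqref{eq:firstorderforced}, and conclude via Algorithm~\ref{Algorithm:algorithm3} that $m=0$ and hence $t^*=0$. Your write-up is in fact more explicit than the paper's, which leaves the sign case analysis and the containment of the interval in $(-1,1)$ to the reader.
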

\begin{proof}
From \eqref{eq:CH n}, it follows that $A=-c_0$ and from Theorem~\ref{lemma:CH coeff}, $\beta(t) \in \mathbb{R}$ satisfies $\beta(1)=-c_0=A$. Thus, $\beta(1)$ also satisfies condition \eqref{eq:firstorderforced}. Using this condition, and the fact that $0< \frac{\epsilon}{1+\gamma(1-\epsilon)}<1$ and $-1 < \frac{-\epsilon}{\gamma+(1-\epsilon)}<0$, it can be seen that, regardless of sign of $\beta(1)$,  \eqref{eq:condition on betai forced} is satisfied for $t=1$. Per Algorithm~\ref{Algorithm:algorithm3}, an upper bound on $t^*$ is therefore $m=0$, which implies that $t^*=0$.
\end{proof}

We conclude this section with two remarks.

\begin{remark}\label{remark:epsilon}
In condition \eqref{eq:condition on betai forced}, the smaller the $\epsilon$ (i.e., the steady-state tightening), the smaller the right hand side, and therefore the smaller the $\beta_i(t)$ must be to satisfy the condition. According to Theorem \ref{lemma:CH coeff}, smaller $\beta_i(t)$'s are achieved with larger $t$'s. Therefore, the upper bound on $t^*$ (and likely $t^*$ itself) grows as $\epsilon$ becomes small. Furthermore, if $\epsilon \ll 1$ (which is typical in applications),  \eqref{eq:condition on betai forced} can be approximated by 
$$
\sum_{i} |\beta_i(t)| \leq \frac{\epsilon}{1+\gamma}
$$
which implies that $|\beta_i(t)| \ll 1$. Thus, for the same constraints $y_j^l$, $y_j^u$, and the same matrices $A$ and $C$, $\beta_i(t)$'s that satisfy this condition are likely smaller than those that satisfy \eqref{eq:condition on betai}. Thus, the upper bound on $t^*$ (and likely $t^*$ itself) is larger in the forced case than the unforced case.
\end{remark}

\begin{remark}

Note that condition \eqref{eq:condition on betai} for the unforced case and \eqref{eq:condition on betai forced} for the forced case become identical when $\epsilon = 1$ (this forces $u=0$ due to $H_0u\in (1-\epsilon)\mathbb{Y}$, which makes intuitive sense). In this sense, Method 1 in the forced case can be viewed a proper extension of Method 1 in the unforced case.
  \end{remark}

\subsection{Method 2: Lyapunov Level Sets}

The second method, which relies on Lyapunov level sets to find an upper bound on $t^*$, requires only minor modifications compared to the input-free case.  We first extend the definition of $O_{n-1}$ in \eqref{eq:On-1} to account for the input in system
\eqref{eq:forced system new coordinates}, where we tighten the steady-state constraint similar to \eqref{eq:MAS forced}:
\begin{align} \label{eq:ivk1}
\widetilde{O}_{n-1} = \big\{(z_0,u): 
 & ~C A^t z_0+H_0 u \in \mathbb{Y}, t=0,\ldots,n-1, \nonumber \\
& H_0 u \in (1-\epsilon) \mathbb{Y}
\big\}. 
\end{align}
As in the case of $O_{n-1}$, this set is a compact polytope. We have the following result.
\begin{theorem}\label{theorem:forced2}
Consider system \eqref{eq:forced system} with Lyapunov function \eqref{eq:lyapunov}--\eqref{eq:lyap Q} and constraint \eqref{eq:constraint}, and suppose Assumption \ref{assumption2} holds. Define $r_1, r_2 \in \mathbb{R}$ as follows: 
\begin{equation}\label{eq:r1forced}
r_1 = \max \big\{r:~\Omega_r \subset \epsilon \mathbb{X} \big\}
\end{equation}
\begin{equation}\label{eq:r2forced}
r_2 = \min \big\{r:~{\tt Proj}_{z} \widetilde{O}_{n-1} \subset \Omega_r \big\},
\end{equation}
where ${\tt Proj}_{z}$ denotes the projection onto $z$-coordinates.
Then, an upper bound on $t^*$ is given by expression \eqref{equ:m_method2}. 
\end{theorem}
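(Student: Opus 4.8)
The plan is to transcribe the proof of Theorem~\ref{theorem:lyapunov bound} into the shifted coordinates $z$ of \eqref{eq:forced system new coordinates}. Since $z$ obeys the \emph{unforced} recursion $z(t+1)=Az(t)$, the Lyapunov machinery is unchanged: taking $Q=I$, $P$ from \eqref{eq:lyap Q}, and $\sigma=\rho(A)^2$ (Theorem~\ref{thm:sigma}), the decay estimate $V(z(t))\le\sigma^t V(z(0))$ of \eqref{eq:lyapunov decay} holds verbatim, and $\Omega_{r_1}$ is positively invariant under $z(t+1)=Az(t)$. The single genuinely new ingredient is the interaction between the inner level set and the tightened steady-state constraint: I would show that once $z(t)\in\Omega_{r_1}$, the output is admissible and remains so. Indeed $\Omega_{r_1}\subset\epsilon\mathbb{X}$ by \eqref{eq:r1forced}, so $Cz(t)\in\epsilon\mathbb{Y}$; any admissible pair satisfies $H_0u\in(1-\epsilon)\mathbb{Y}$ by \eqref{eq:MAS forced}; hence $y(t)=Cz(t)+H_0u=\epsilon\big(\epsilon^{-1}Cz(t)\big)+(1-\epsilon)\big((1-\epsilon)^{-1}H_0u\big)$ is a convex combination of two points of $\mathbb{Y}$, which lies in $\mathbb{Y}$ by convexity; and invariance of $\Omega_{r_1}$ propagates $z(\tau)\in\Omega_{r_1}$, hence $y(\tau)\in\mathbb{Y}$, for every $\tau\ge t$.

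Next I would check that $r_1,r_2$ are well defined and that $r_1\le r_2$, so that $m\ge0$ in \eqref{equ:m_method2}. The set $\epsilon\mathbb{X}=\{x:Cx\in\epsilon\mathbb{Y}\}$ is convex with the origin in its interior (since $C\cdot0=0\in\mathrm{int}\,\mathbb{Y}$), so a small enough ellipsoid $\Omega_r$ is contained in it and $r_1$ exists; and ${\tt Proj}_{z}\widetilde{O}_{n-1}$ is the image of the compact polytope $\widetilde{O}_{n-1}$ of \eqref{eq:ivk1} under a linear map, hence compact, so $r_2$ exists. For the ordering, every $z_0\in\Omega_{r_1}$ gives $(z_0,0)\in\widetilde{O}_\infty$: with $u=0$ we have $H_0u=0\in(1-\epsilon)\mathbb{Y}$, and by invariance $A^tz_0\in\Omega_{r_1}\subset\epsilon\mathbb{X}\subset\mathbb{X}$, so $CA^tz_0+H_0u=CA^tz_0\in\mathbb{Y}$ for all $t\ge0$. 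Thus $\Omega_{r_1}\subset{\tt Proj}_{z}\widetilde{O}_\infty\subset{\tt Proj}_{z}\widetilde{O}_{n-1}\subset\Omega_{r_2}$, giving $r_1\le r_2$.

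Finally I would run the redundancy argument. Suppose $(z_0,u)$ satisfies $H_0u\in(1-\epsilon)\mathbb{Y}$ together with $y(t)\in\mathbb{Y}$ for $t=0,\dots,m$; when $m\ge n-1$ (otherwise one uses $n-1$ in place of $m$), this places $(z_0,u)$ in $\widetilde{O}_{n-1}$ and hence $V(z_0)\le r_2$. Then $V(z(m+1))\le\sigma^{m+1}r_2$, and since $m=\mathrm{floor}\big(\log(r_1/r_2)/\log\sigma\big)$ with $\log\sigma<0$ and $\log(r_1/r_2)\le0$, we get $\sigma^{m+1}r_2<r_1$, so $z(m+1)\in\Omega_{r_1}$. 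By the first paragraph, $y(t)\in\mathbb{Y}$ for all $t\ge m+1$, so those constraints are redundant in \eqref{eq:MAS forced} and $t^*\le m$. I expect the main difficulty to be bookkeeping rather than a new idea: keeping the $x\leftrightarrow z$ shift and the projection ${\tt Proj}_{z}$ straight, making the convex-combination step invoke the \emph{tightened} constraint exactly where the factor $\epsilon$ in \eqref{eq:r1forced} enters, and treating the degenerate regime $m<n-1$. For completeness I would also record, as in the unforced case, the closed form $r_1=\min_j(\epsilon\min\{y_j^l,y_j^u\})^2/(c_jP^{-1}c_j^{\sf T})$ and the computation of $r_2$ from the vertices of ${\tt Proj}_{z}\widetilde{O}_{n-1}$.
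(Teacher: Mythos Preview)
Your proposal is correct and follows essentially the same route as the paper: transplant the unforced Lyapunov decay to the shifted $z$-dynamics, use $\Omega_{r_1}\subset\epsilon\mathbb{X}$ together with $H_0u\in(1-\epsilon)\mathbb{Y}$ to land in $\mathbb{Y}$, and read off $m$ from the decay rate. Your convex-combination step is exactly the paper's Minkowski identity $\epsilon\mathbb{Y}\oplus(1-\epsilon)\mathbb{Y}=\mathbb{Y}$ written pointwise; you additionally supply the well-definedness of $r_1,r_2$ and the ordering $r_1\le r_2$ (via $(z_0,0)\in\widetilde{O}_\infty$), which the paper's forced-case proof leaves implicit. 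One minor note: the theorem is stated for general $Q$ and the $\sigma$ of \eqref{eq:sigma}, so there is no need to specialize to $Q=I$ and $\sigma=\rho(A)^2$ in the proof --- the decay estimate \eqref{eq:lyapunov decay} already holds for any admissible $Q$.
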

\begin{proof}
Suppose $(z(0),u) \in \widetilde{O}_{n-1}$. Then, by \eqref{eq:ivk1} and \eqref{eq:r2forced}, $z(0) \in \Omega_{r_2}$ and $H_0 u \in (1-\epsilon) \mathbb{Y}$.
By the same arguments as in the proof of Theorem~\ref{theorem:lyapunov bound} applied to \eqref{eq:forced system new coordinates}, $z(t)$ starting from such $z(0)$ satisfies $z(t) \in \Omega_{r_1}$ for all $t \geq m$,
where $m$ is given by expression \eqref{equ:m_method2} with $r_1$ and $r_2$ given by \eqref{eq:r1forced}--\eqref{eq:r2forced}. This, together with \eqref{eq:r1forced}, implies that $z(t) \in \epsilon \mathbb{X}$ or, equivalently, $Cz(t) \in \epsilon \mathbb{Y}$ for all $t \geq m$, which implies that $y(t)$ in \eqref{eq:forced system new coordinates} satisfies:
$$
y(t) = Cz(t)+H_0u \in \epsilon\mathbb{Y}\oplus (1-\epsilon)\mathbb{Y} = \mathbb{Y}
$$
where $\oplus$ denotes the Minkowski set addition.
To summarize, the first $n$ set inclusions,
$CA^t z(0)+H_0 u \in \mathbb{Y}$, $t=0,  \ldots, n-1$, coupled with $H_0 u \in (1-\epsilon)\mathbb{Y}$, make redundant (i.e., automatically holding) the inequalities corresponding to $y(t)\in \mathbb{Y}$ for $t \geq m$. Thus $t^* \leq m$. 
\end{proof}

Procedures for computing $r_1$ and $r_2$ in the theorem are similar to those in Section \ref{sec:unforced}. Specifically, given $\epsilon \in (0,1)$, $r_1$ can be found by
\begin{equation}\label{eq:k1forcedeq}
r_1 = \min_j \frac{(\min\{\epsilon y_j^l, \epsilon y_j^u\})^2}{c_j P^{-1} c_j^T}
\end{equation}
To find $r_2$, 
we first compute 
 $\widetilde{O}_{n-1}$ using \eqref{eq:ivk1} 
and convert it into the V-representation.  Let the vertices of $\widetilde{O}_{n-1}$ in the V-representation be denoted by $v_j \in \mathbb{R}^{n+m}$, where the first $n$ components correspond to the $z$-coordinates and the next $m$ components correspond to the $u$-coordinates. 
Then, $r_2$ can be found by 
\begin{equation}\label{eq:k2forcedeq}
r_2 = \max_j \{\bar{v}_j^TP\bar{v}_j\}
\end{equation}
where $\bar{v}_j \in \mathbb{R}^n$ is a vector consisting of the first $n$ components of $v_j$.

 An algorithm similar to Algorithm \ref{Algorithm:algorithm2} can be constructed to find the upper bound using Theorem \eqref{theorem:forced2}, see Algorithm \ref{Algorithm:algorithm4}.

 \begin{algorithm}
 \caption{Compute upper bound on $t^*$ using Method 2 for the case of system with constant input}
     \hspace*{\algorithmicindent} \textbf{Input: $A, C, y_j^l, y_j^u, \epsilon$} \\
    \hspace*{\algorithmicindent} \textbf{Output: $m$ such that $t^*\leq m$} \\
 \begin{algorithmic}[1]
  \STATE Compute $P$ using \eqref{eq:lyap Q} with $Q=I$. Compute $\sigma = \rho(A)^2$
 \STATE Compute $r_1$ using \eqref{eq:k1forcedeq}. 
 \STATE Construct $\widetilde{O}_{n-1}$ as in \eqref{eq:ivk1}, convert to V-representation, and compute $r_2$ using \eqref{eq:k2forcedeq}. 
 \STATE  Compute $m$ using expression  \eqref{equ:m_method2}.
 
 \end{algorithmic} 
 \label{Algorithm:algorithm4}
 \end{algorithm}

\begin{remark}
Note that $r_1$ and $r_2$ in \eqref{eq:r1forced}--\eqref{eq:r2forced} are smaller than those defined in Theorem \ref{theorem:lyapunov bound} because $\epsilon < 1$. This means that the upper bound computed using Algorithm \ref{Algorithm:algorithm4} is generally larger than that computed using Algorithm \ref{Algorithm:algorithm2} for the case of unforced systems. Furthermore, note that if $\epsilon = 1$, then the unforced case and the forced case become identical. This makes intuitive sense because if $\epsilon=1$, then $u=0$ to ensure $H_0 u\in (1-\epsilon)\mathbb{Y}$. In this sense, Method 2 in the forced case can be seen as the proper extension of Method 2 in the unforced case.
\end{remark}

\subsection{Numerical Comparison}

In this section, we perform a Monte Carlo study similar to the one presented in Section \ref{sec:numerical comparison unforced} to compare the upper bound obtained using Method 1 (Algorithm \ref{Algorithm:algorithm3}) with that obtained using Method 2 (Algorithm \ref{Algorithm:algorithm4}). For this purpose, we choose $\epsilon = 0.01$, and use the same 16,000 random systems described in Section \ref{sec:numerical comparison unforced} but this time allow $u\neq 0$. For each system, we compute $t^*$ using the algorithm described in \cite{Gilbert_1991}. Comparing the value of $t^*$ in the forced case with that in the unforced case (Section \ref{sec:numerical comparison unforced}), we see that $t^*$ in the forced case is larger than the $t^*$ in the unforced case for \underline{all} the random systems considered, which is an interesting observation. 

For each of the 16,000 system, we also compute the upper bounds on $t^*$ using Algorithms \ref{Algorithm:algorithm3} and \ref{Algorithm:algorithm4}. We denote these upper bounds by $m_1$ and $m_2$ respectively, where the subscript refers to the respective method. Similar to the true value of $t^*$, we find that the upper bounds in the forced case are always larger than those in the unforced case (presented in Section \ref{sec:numerical comparison unforced}). This is consistent with Remark~\ref{remark:epsilon}.

To compare the upper bounds against the true value of $t^*$ in the forced case, we construct the histograms of $m_i -t^*$, $i=1, 2$, shown in Fig. \ref{fig:method1forced}. As seen from the histograms and the underlying data, Method 1 performs well overall, with a median of 0, and more importantly, it outperforms Method 2 in \underline{all} the random systems considered. 

\begin{figure}
\centering
\includegraphics[width=\columnwidth]{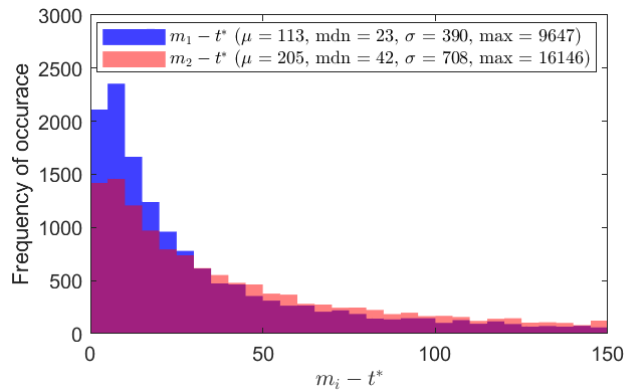}
\caption{Histograms of $m_1-t^*$ and $m_2-t^*$ (i.e., the tightness of each upper bound) from the Monte Carlo study. In the legend, $\mu$, $\sigma$, and mdn refer to the mean, standard deviation, and median, respectively.}
\label{fig:method1forced}
\end{figure}

\section{Conclusions and Future Work}\label{sec:conclusions}

This paper presented two computationally efficient methods to obtain {\it upper bounds} on the admissibility index of Maximal Admissible Sets for discrete-time LTI systems. The first method is algebraic and is based on matrix power series, while the second is geometric and is based on Lyapunov level sets. The two methods were rigorously introduced, a detailed numerical comparison between the two was provided, and the methods were extended to systems with constant inputs. It was shown that Method 1 outperforms Method 2, and that the upper bounds (and likely the admissibility index itself) depend on the spectral radius of matrix $A$ and also the steady-state tightening, $\epsilon$, in the case of systems with constant inputs. 

Future work will investigate the reason why Method 1 outperformed Method 2 in our numerical study. Another topic for future research is to find other power series expansions (beyond what is provided by the Cayley Hamilton method) to further improve the upper bounds in Method 1. Upper bounds for the admissibility index of {\it robust} maximal admissible set for systems with disturbances is another avenue of future research. 

\ifCLASSOPTIONcaptionsoff
  \newpage
\fi

\bibliographystyle{unsrt}
\bibliography{References}

\begin{IEEEbiography} [{\includegraphics[width=1in,height=1.25in,clip,keepaspectratio]{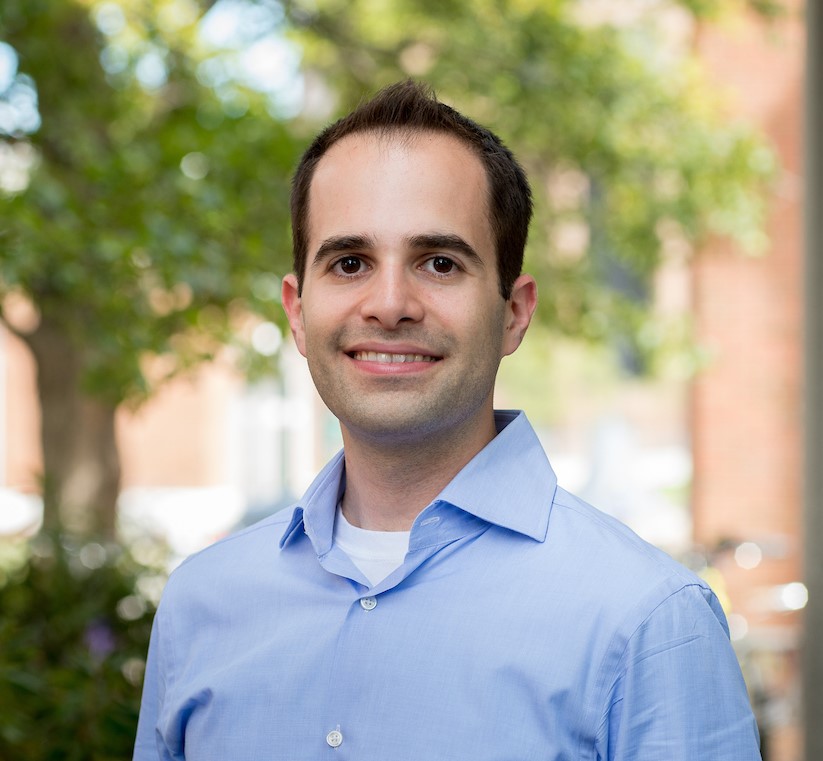}}]{Hamid R. Ossareh}
 (Senior Member, IEEE) received the
BA.Sc. degree from the University of Toronto in 2008,
and the Ph.D. degree from the University of Michigan, Ann Arbor in 2013. From 2013–2016, he was
with Ford Research and Advanced Engineering as a
Research Engineer. Since 2016, he has been a faculty member at the University of Vermont (UVM), currently at the rank of Associate Professor. His primary research
interests include the areas of systems and control theory, more specifically
predictive control, nonlinear control, and constrained control, with application areas of  automotive, power, aerospace, and xerographic systems. He holds
several patents and has won several awards, including the Faculty of the Year
award from IEEE Green Mountain Section, Excellence in Research Award from UVM College of Engineering and Mathematical Sciences, and Ford Technical Achievement Award from Ford Motor Company.
\end{IEEEbiography}

\begin{IEEEbiography} [{\includegraphics[width=1in,height=1.25in,clip,keepaspectratio]{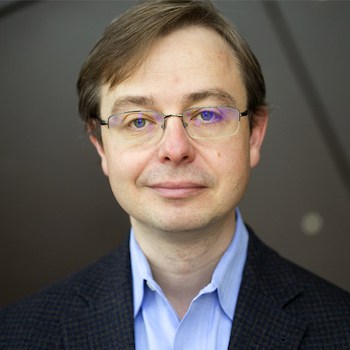}}]{Ilya Kolmanovsky }
(Fellow, IEEE) received the
Ph.D. degree in aerospace engineering from the
University of Michigan in 1995. He is currently a
Professor with the Department of Aerospace Engineering, University of Michigan, Ann Arbor, MI,
USA. Prior to joining the University of Michigan,
as a Faculty Member in 2010, he was with Ford
Research and Advanced Engineering in Dearborn,
Michigan, for close to 15 years. His research interests are in control theory for systems with state and
control constraints, and in control applications to
aerospace and automotive systems. He is a Senior Editor of IEEE Transactions on Control Systems Technology.
\end{IEEEbiography}


\end{document}